\newcommand{\RR}{\mathbbmss{R}} 
\newcommand{\NN}{\mathbbmss{N}} 
\newcommand{\EE}[1]{\mathbbmss{E}\left[#1\right]} 
\newcommand{\PP}[1]{\mathbbmss{P}\left[#1\right]} 
\newcommand{\VV}[1]{\mathbbmss{V}\left[#1\right]} 
\newcommand{\one}{\mathbbmss 1} 
\newcommand{\Normal}[2]{\mathcal{N}\left(#1,#2\right)} 
\newcommand{\pto}{\overset{p}{\rightarrow}} 
\newcommand{\dto}{\leadsto} 
\newcommand{\Lto}[1]{\overset{\mathbbmss{L}^{#1}}{\rightarrow}} 
\newcommand{\influence}[1]{\varphi\left(#1\right)}
\newcommand{\hatinfluence}[1]{\hat{\varphi}\left(#1\right)}
\newcommand{\influencesquared}[1]{\varphi^2\left(#1\right)}
\newcommand{\data}{\mathcal{D}}
\newcommand{\projection}[2]{v_{#1}\left(#2\right)}
\newcommand{\hatprojection}[2]{\hat{v}_{#1}\left(#2\right)}
\newcommand{\richprojection}[3]{v_{#1}^{#2}\left(#3\right)}
\newcommand{\richhatprojection}[3]{\hat{v}_{#1}^{#2}\left(#3\right)}
\newcommand{\richhatinfluence}[2]{\hat{\varphi}^{#1}\left(#2\right)}
\newcommand{\indep}{\perp \!\!\! \perp}
\newcommand{\lorenzo}[1]{#1}
\newcommand{\drfos}{\hat{\beta}_{\texttt{DR-FoS}}}
\newcommand{\richdrfos}[1]{\hat{\beta}_{\texttt{DR-FoS}}^{#1}}
\newcommand{\DRFOS}{\texttt{DR-FoS}}
\begin{document}

\title{Doubly-Robust Functional Average Treatment Effect Estimation}

\author{\name Lorenzo Testa \email lorenzo@stat.cmu.edu \\
       \addr Department of Statistics \& Data Science, Carnegie Mellon University \\
       L'EMbeDS, Sant'Anna School of Advanced Studies
       \AND
       \name Tobia Boschi \email tobia.boschi@ibm.com \\
       \addr IBM Research Europe
       \AND
       \name Francesca Chiaromonte \email fxc11@psu.edu \\
       \addr Department of Statistics, Penn State University \\
       L'EMbeDS, Sant'Anna School of Advanced Studies
       \AND
       \name Edward H.~Kennedy \email edward@stat.cmu.edu \\
       \addr Department of Statistics \& Data Science, Carnegie Mellon University
       \AND
       \name Matthew Reimherr* \email mlr36@psu.edu \\
       \addr Department of Statistics, Penn State University \\
       Amazon Science
       }

\maketitle

\begin{abstract}
     Understanding causal relationships in the presence of complex, structured data remains a central challenge in modern statistics and science in general. While traditional causal inference methods are well-suited for scalar outcomes, many scientific applications demand tools capable of handling functional data -- outcomes observed as functions over continuous domains such as time or space. Motivated by this need, we propose \DRFOS{}, a novel method for estimating the Functional Average Treatment Effect (FATE) in observational studies with functional outcomes. \DRFOS{} exhibits double robustness properties, ensuring consistent estimation of FATE even if either the outcome or the treatment assignment model is misspecified. By leveraging recent advances in functional data analysis and causal inference, we establish the asymptotic properties of the estimator, proving its convergence to a Gaussian process. This guarantees valid inference with simultaneous confidence bands across the entire functional domain. Through extensive simulations, we show that \DRFOS{} achieves robust performance under a wide range of model specifications. Finally, we illustrate the utility of \DRFOS{} in a real-world application, analyzing functional outcomes to uncover meaningful causal insights in the SHARE ({\em Survey of Health, Aging and Retirement in Europe}) dataset.
\end{abstract}

\begin{keywords}
  functional data analysis, causal inference, semiparametric statistics, missing data, double robustness
\end{keywords}

\section{Introduction}

Causal inference has become a foundational tool for understanding the effects of interventions in a variety of fields, including medicine \citep{prosperi2020causal}, economics \citep{varian2016causal}, and the social sciences \citep{imbens2024causal}. Most causal inference methods are developed for settings in which the outcome of interest is a scalar variable, such as a single measurement of health status or income level. However, there is a growing need to expand these methods to accommodate more complex data structures, particularly functional data, where the outcome is observed as a function over a continuous domain, such as time or space \citep{kokoszka2017introduction, ramsay2005}. Functional outcomes arise in numerous applications, such as longitudinal studies where medical or wearable devices are used to record patients' biometrics over time \citep{boschi2024new, jeong2024identifying,lundborg2022conditional}, epidemiological studies where infectious diseases are tracked across spatiotemporal domains \citep{boschi2021functional, boschi2023contrasting}, and neuroscience studies where brain activity is monitored over time and space \citep{boschi2024fasten, qi2018function}. Functional data introduces specific challenges to causal inference, necessitating theory and methodology that can capture the infinite-dimensional nature of these outcomes while providing reliable estimates of treatment effects.

Despite the scientific relevance of this problem, the estimation of treatment effects in functional data settings has not received much attention. The primary challenge lies in handling the complexity of functional data, which requires specialized tools to account for the structure and dependencies across the continuous domain. Traditional scalar methods for estimating treatment effects cannot be applied directly, and current functional data techniques either are designed for settings without access to additional covariates \citep{cremona2018iwtomics} or model outcomes with non-robust tools such as function-on-scalar regression \citep{ecker2024causal, ieva2025enhancing}, which may perform poorly if the model is misspecified. A relevant exception is \citet{liu2024double} -- where the authors provide, under strong parametric assumptions, an estimator that can be robust to some forms of model misspecification. 

In this paper, we introduce the Doubly-Robust Function-on-Scalar estimator (\DRFOS{}), a novel approach for estimating the Functional Average Treatment Effect (FATE) in settings with functional outcomes. Our proposed method draws from the concept of double robustness, a property in causal inference that provides consistent estimation even if only one of two models -- either the outcome model or the treatment assignment model -- is correctly specified. Double robustness has been widely used in scalar treatment effect estimation, particularly through methods such as Augmented Inverse Probability Weighting (AIPW) and Targeted Maximum Likelihood Estimation \citep{tsiatis2006semiparametric, van2006targeted}. The extension of double robustness to functional data requires a careful analysis of the underlying functional structure and the choice of estimation strategies that can handle high-dimensional functional representations.
\DRFOS{} leverages recent advances in functional data analysis, combining ideas from functional central limit theorems \citep{dette2020functional} and simultaneous confidence bands \citep{liebl2023fast, pini2017interval} with doubly-robust estimation techniques to provide a flexible, theoretically sound estimator for FATE.

The contributions of this paper, building upon prior work in causal inference \citep{kennedy2019estimating, lin2023causal} and functional data analysis \citep{dette2020functional}, are threefold. First, we develop the \DRFOS{} estimator, which integrates information from both the outcome model and the treatment assignment model, and features double robustness properties to mitigate the risk of misspecification in either. Second, we rigorously analyze the asymptotic properties of the \DRFOS{} estimator, showing that it converges to a Gaussian process under weak regularity conditions. We also provide, under minimal assumptions, estimation procedures that facilitate the construction of confidence bands with simultaneous guarantees, thus enabling inference on the estimated functional treatment effects. Finally, as intermediate steps in our proofs, we establish novel results on the asymptotic distribution of the AIPW estimator for the average treatment effect (ATE) in the case of multivariate vector outcomes, and on asymptotic properties of the inverse probability weighting (IPW) estimator for functional outcomes. 

To validate the effectiveness of \DRFOS{}, we conduct a series of simulations that demonstrate its robustness and accuracy under various model misspecification scenarios. Our simulations illustrate that \DRFOS{} consistently outperforms methods based solely on the estimation of propensity score or outcome regression models, especially under misspecification, underscoring the practical utility of double robustness in functional data settings. Finally, we apply \DRFOS{} to the SHARE dataset ({\em Survey of Health, Aging and Retirement in Europe}; \citet{mannheim2005survey})
analyzing the causal effect of chronic conditions on functional indicators of quality of life.

The remainder of this paper is organized as follows. Section \ref{sec:identify} defines the target and gives causal assumptions under which the target is identifiable. Section \ref{sec:estimate} introduces the estimation procedure and the supporting theoretical results required for inference. Section \ref{sec:simulate} describes our simulation study, which provides further support for our approach. We then apply \DRFOS{} to the SHARE dataset in Section \ref{sec:apply}. Finally, Section \ref{sec:conclude} contains some concluding remarks. All code for reproducing our analysis is available at \url{https://github.com/testalorenzo/DR-FoS}.

\subsection{Notation and setup}
Let $\mathcal T = [t_1,t_2]$ be a closed bounded interval. Without loss of generality, one can consider $\mathcal T = [0,1]$. 
We observe a collection of $n$ independent and identically distributed samples, $\left\{\data_i = (A_i, X_i, \mathcal{Y}_i)\right\}_{i=1}^n$, where $A_i\in\{0,1\}$ is a binary variable that indicates whether subject $i$ belongs to the treated group ($A_i=1$) or to the control group ($A_i=0$); $X_i\in\RR^p$ is a $p$-dimensional vector of covariates; and $\mathcal{Y}_i\in C(\mathcal{T})$ is a continuous function representing the outcome defined over $\mathcal{T}$. We assume $C(\mathcal{T})$, the space of continuous functions over $\mathcal{T}$, is equipped with the uniform topology, which induces the sup-norm defined by $\|f\|=\sup_{t\in\mathcal{T}} |f(t)|$, for $f\in C(\mathcal{T})$, thus making $(C(\mathcal{T}), \|\cdot\|)$ a Banach space. We denote \textit{weak convergence} of a random object $W_n$ (which can be either finite or infinite dimensional) to a limit $W$ as $W_n\dto W$. Similarly, we denote \textit{convergence in $\mathbbmss{L}^2$ norm} as $W_n\Lto{2}W$. Finally, throughout this paper, we let $\mathcal{D} = (A,X,\mathcal{Y})$ denote an independent copy of $\mathcal{D}_i = (A_i,X_i,\mathcal{Y}_i)$.

We would like to stress that our choice of the sup-norm, following \citet{dette2020functional}, significantly differentiates our work from most of the standard literature on functional data \citep{bosq2000linear, ferraty2006nonparametric, kokoszka2017introduction, ramsay2005} and causal inference \citep{liu2024double, luedtke2024one}, which instead simplifies the analysis by assuming that functional outcomes exist in a Hilbert space (e.g., the space of square integrable functions $\mathbbmss{L}^2$). In a Banach space, standard asymptotics breaks down without further assumptions; we thus need to introduce tailored assumptions to be able to prove central limit theorems in this setting. 
Even more importantly, the inferential problem of constructing confidence bands would be ill-posed in the $\mathbbmss{L}^2$ space. In fact, the $\mathbbmss{L}^2$ norm provides a measure of the global, average deviation of a function along its domain, but does not account for pointwise or localized deviations, which are essential for constructing simultaneous confidence bands. Instead, confidence bands require control over the supremum norm, making $C(\mathcal{T})$, the space of continuous functions endowed with the sup-norm, the most natural choice for this purpose.

\section{Definition and identification}
\label{sec:identify}

Our target is the \textit{functional average treatment effect} (FATE), defined as
\begin{equation}
\label{eq:FATE}
    \beta = \EE{\mathcal{Y}^{(1)}- \mathcal{Y}^{(0)}}\,,
\end{equation}
where the expectation is taken over the data generating process. Recall $\mathcal{Y}$ is a function, so $\beta$ is as well; we omit function arguments for notational simplicity. In terms of functional analysis, if $\mathcal{P}$ is the class of probability measures on $C(\mathcal{T})$, we can represent our target as a functional $\psi:\mathcal{P}\to C(\mathcal{T})$ such that $\psi(\mathbbmss{P})=\beta$.

Despite the increased complexity of the outcome variable and target, conditions under which the latter is identifiable are standard. In particular, we leverage the following setup.

\begin{assumption}[Identifiability]
\label{ass:identify}
Let the following identifiability assumptions hold:
    \begin{enumerate}[label=\textbf{\alph*.}]
    \item \textbf{Consistency.} The potential outcome of a treatment is the same regardless of the mechanism by which the treatment is administered; that is, $\mathcal{Y} = \mathcal{Y}^{(a)}$ if $A=a$. Equivalently, $\mathcal{Y} = A\mathcal{Y}^{(1)} + (1-A)\mathcal{Y}^{(0)}$.
    \item \textbf{No unmeasured confounding.} 
    $(\mathcal{Y}^{(0)}, \mathcal{Y}^{(1)}) \indep A \,|\, X$.
    \item \textbf{Positivity.} $0<\PP{A=1\,|\,X}<1$ almost surely.
\end{enumerate}
\end{assumption}
With these assumptions in place, we are ready to introduce the proposed estimator.

\section{Estimation and inference}
\label{sec:estimate}

\subsection{Estimation}
\label{subsec:estimate}
Define the \textit{propensity score} and \textit{regression function}, respectively, as
\begin{equation}
    \pi^{(a)}(x) = \PP{A = a\,|\,X=x}=\EE{\one\{A = a\}\,|\,X = x}\,,
\end{equation}
\begin{equation}
    \mu^{(a)}(x) = \EE{\mathcal{Y}^{(a)}\,|\,X=x,\,A=a}\,,
\end{equation}
and denote their estimates by $\hat{\pi}^{(a)}$ and $\hat{\mu}^{(a)}$. Notice that the techniques or algorithms employed to produce such estimates are arbitrary; often, logistic regression or random forest classifiers are used to compute $\hat\pi^{(a)}$, while complex nonparametric tools, such as neural networks, can be employed in the estimation of $\hat{\mu}^{(a)}$.
Finally, define the \textit{case-corrected regression function} as
\begin{equation}
\begin{split}
    \gamma^{(a)}(\data) &= \mu^{(a)}(X) + \frac{\one\{A=a\}(\mathcal{Y}^{(a)} - \mu^{(a)}(X))}{\pi^{(a)}(X)} = \begin{cases}
        \mu^{(a)}(X) + \frac{\mathcal{Y}^{(a)} - \mu^{(a)}(X)}{\pi^{(a)}(X)} & \text{if $A = a$}\\
        \mu^{(a)}(X) & \text{if $A \neq a$}\,.\\
    \end{cases}
\end{split}
\end{equation}

The following Lemma is pivotal in the derivation of the \DRFOS{} estimator.
\begin{lemma}
\label{th:equivalence}
Under Assumptions \ref{ass:identify} (identifiability), the FATE defined in Eq.~\ref{eq:FATE} can be rewritten as:
\begin{equation}
    \beta = \EE{\gamma^{(1)}(\data)- \gamma^{(0)}(\data)}\,.
\end{equation}
\end{lemma}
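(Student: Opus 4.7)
The plan is to establish, for each $a\in\{0,1\}$, the marginal identity $\EE{\gamma^{(a)}(\data)} = \EE{\mathcal{Y}^{(a)}}$; taking the difference between $a=1$ and $a=0$ then yields the claim by linearity of expectation. This reduces the lemma to the classical doubly robust identification formula, adapted to the functional setting.

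First, I would split $\EE{\gamma^{(a)}(\data)}$ into two pieces by linearity: the regression term $\EE{\mu^{(a)}(X)}$ and the weighted residual term $\EE{\one\{A=a\}(\mathcal{Y}^{(a)}-\mu^{(a)}(X))/\pi^{(a)}(X)}$. For the regression term, no unmeasured confounding (Assumption~\ref{ass:identify}.b) lets me drop the $A=a$ conditioning from $\mu^{(a)}(X) = \EE{\mathcal{Y}^{(a)}\mid X, A=a}$, and the tower property then gives $\EE{\mu^{(a)}(X)} = \EE{\mathcal{Y}^{(a)}}$. For the residual term, I would condition on $X$, pull the $1/\pi^{(a)}(X)$ factor outside (well-defined almost surely thanks to positivity, Assumption~\ref{ass:identify}.c), and invoke the conditional independence $\mathcal{Y}^{(a)}\perp A \mid X$ to factor
\begin{equation*}
\EE{\one\{A=a\}(\mathcal{Y}^{(a)} - \mu^{(a)}(X))\mid X} = \pi^{(a)}(X)\bigl(\EE{\mathcal{Y}^{(a)}\mid X} - \mu^{(a)}(X)\bigr),
\end{equation*}
which again by no unmeasured confounding equals zero. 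Consistency (Assumption~\ref{ass:identify}.a) then ensures the resulting representation is expressible entirely through observable quantities, since $\one\{A=a\}\mathcal{Y}^{(a)} = \one\{A=a\}\mathcal{Y}$ almost surely.

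The main subtlety, more a bookkeeping point than a genuine obstacle, is that $\mathcal{Y}$ takes values in the Banach space $C(\mathcal{T})$, so $\mu^{(a)}(X)$ and $\gamma^{(a)}(\data)$ are function-valued and all expectations above must be interpreted as Bochner integrals. Since the only random weights in the manipulations ($\one\{A=a\}$ and $\pi^{(a)}(X)$) are real-valued, the tower property, linearity, and factorization under conditional independence extend directly under Bochner integrability of $\mathcal{Y}^{(a)}$; equivalently, one can verify the identity pointwise in $t\in\mathcal{T}$, reducing it to the scalar AIPW identification.
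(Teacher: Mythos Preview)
Your argument is correct and follows essentially the same route as the paper: split $\EE{\gamma^{(a)}(\data)}$ into the regression piece and the weighted residual, use iterated expectations together with no unmeasured confounding to identify the first as $\EE{\mathcal{Y}^{(a)}}$ and to show the second vanishes, then difference over $a$. Your added remarks on the role of consistency and on interpreting the function-valued expectations as Bochner integrals (or equivalently working pointwise in $t$) are more explicit than the paper's presentation but do not change the underlying proof.
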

We defer the proof of this and subsequent claims to Appendix Section~\ref{suppsec:proofs}. 

Let $\hat{\mathbbmss{P}}$ denote the sample distribution on which $\hat{\mu}^{(a)}$ and $\hat{\pi}^{(a)}$ are produced. We assume to have access to a separate sample distribution $\mathbbmss{P}_n$ independent of $\hat{\mathbbmss{P}}$, and define the \textit{one-step} functional augmented inverse probability weighting estimator $\drfos$ as
\begin{equation}
    \label{eq:drfos}
\begin{split}
    \drfos &= \mathbbmss{P}_n \left[ \hat{\gamma}^{(1)}(\data) -\hat{\gamma}^{(0)}(\data) \right] \\
    &= \frac{1}{n} \sum_{i=1}^n \left( \hat{\gamma}^{(1)}(\data_i) -\hat{\gamma}^{(0)}(\data_i) \right) \\
    &=\frac{1}{n} \sum_{i=1}^n \left[\left( \hat{\mu}^{(1)}(X_i) + \frac{A_i(\mathcal{Y}_i - \hat{\mu}^{(1)}(X_i))}{\hat{\pi}^{(1)}(X_i)} \right) -  \left( \hat{\mu}^{(0)}(X_i) + \frac{(1-A_i) (\mathcal{Y}_i - \hat{\mu}^{(0)}(X_i))}{1 - \hat{\pi}^{(1)}(X_i)} \right)\right]\,.
\end{split}
\end{equation}
Notice that if $\hat{\pi}^{(a)} = \pi^{(a)}$ and $\hat{\mu}^{(a)} \neq \mu^{(a)}$, we have $\EE{\drfos} = \beta$. Similarly, if $\hat{\pi}^{(a)} \neq \pi^{(a)}$ and $\hat{\mu}^{(a)} = \mu^{(a)}$, we again have $\EE{\drfos} = \beta$. In other words, $\drfos$ remains unbiased as long as either the propensity score model $\hat{\pi}^{(a)}$ or the outcome regression model $\hat{\mu}^{(a)}$ is correctly specified. This illustrates the \textit{double robustness} of the estimator, a property that we will further explore under additional assumptions on convergence rates later in the paper.

\begin{remark}
    The derivation of $\drfos$ exploits the \textit{influence functions} of the target of inference $\beta$. Unlike in standard semiparametric theory \citep{tsiatis2006semiparametric}, the influence functions here are infinite-dimensional and take the form
    \begin{equation}
        \influence{\data} = \gamma^{(1)}(\data) - \gamma^{(0)}(\data) - \beta \,.
    \end{equation}
    However, perhaps surprisingly, in our developments we never need these infinite-dimensional functions to be well-defined. Instead, we only require that their projections -- which are finite-dimensional random vectors -- satisfy some consistency conditions. We believe that the study of infinite-dimensional influence functions deserves a focus on its own, but it is outside the scope of this paper.
\end{remark}

\begin{remark}
    The framework developed so far can be leveraged to enable \DRFOS{} to attain further robustness. In fact, multiple base models can be fitted for both the propensity score and the outcome regression, with their predictions subsequently combined using techniques such as stacking \citep{pavlyshenko2018using} or SuperLearner \citep{van2007super}. Then, these ensemble models can be incorporated into Eq.~\ref{eq:drfos}. By adopting this approach, the \DRFOS{} procedure remains consistent as long as at least one among the base models is consistent, providing further protection against model misspecification. 
\end{remark}

A separate independent sample $\mathbbmss{P}_n$ can be obtained by splitting the data. To obtain full sample size efficiency, we exploit \textit{cross-fitting}. Cross-fitting works as follows. We first randomly split the observations $\{\data_1,\dots,\data_n\}$ into $J$ disjoint folds (without loss of generality, we assume that the number of observations $n$ is divisible by $J$). For each $j=1,\ldots, J$ we form $\hat{\mathbbmss{P}}^{[-j]}$ with all but the $j$-th fold, and $\mathbbmss{P}_n^{[j]}$ with the $j$-th fold.
Then, we learn $\hat{\mu}^{(a)[-j]}$ and $\hat{\pi}^{(a)[-j]}$ on $\hat{\mathbbmss{P}}^{[-j]}$, and evaluate $\drfos^{[j]}$ on $\mathbbmss{P}_n^{[j]}$. Finally, we average to obtain our \textit{cross-fitted} estimator as
\begin{equation}
    \drfos = \frac{1}{J} \sum_{j=1}^J \drfos^{[j]}\,.
\end{equation}

\begin{remark}
\label{rem:missing}
    The causal inference problem we are considering naturally resembles the missing data setup with two levels of missingness \citep{tsiatis2006semiparametric}. Let $R$ denote the missingness indicator. In this case, the prototypical observed data can be represented as $\data = (X, R, R\mathcal{Y})$. The term $R\mathcal{Y}$ highlights that we only observe $\mathcal{Y}$ when $R=1$. Under the assumptions of missing at random (MAR), that is $\mathcal{Y} \indep R \mid X$, and positivity, i.e.~$0<\PP{R=1\mid X=x}<1$ for every $x$ almost surely, the target $\beta = \EE{\mathcal{Y}}$ can be identified, and then efficiently estimated using $\drfos = n^{-1} \sum_{i=1}^n \left( \hat{\mu}(X_i) + \frac{A_i(\mathcal{Y}_i - \hat{\mu}(X_i))}{\hat{\pi}(X_i)} \right)$, where $\hat\pi$ is an estimator of $\PP{R=1\mid X=x}$ and $\hat\mu$ is an estimator of $\EE{\mathcal Y\mid X, R=1}$.
\end{remark}

\subsection{Inference}
In the well-known scalar response setting, standard semiparametric theory shows that the ATE estimator, under some conditions, is asymptotically normal \citep{kennedy2022semiparametric}. Therefore, we can show that even when the outcome is functional, \textit{pointwise asymptotic normality} holds if similar conditions are satisfied (see Corollary \ref{cor:pant} below). However, this does not imply simultaneous guarantees over the continuous domain of our functional outcome. Thus, we rely on convergence of probability measures in the space of continuous functions to show that our estimator possesses an asymptotic Gaussian process behaviour. Then, we exploit this fact to build confidence bands that guarantee simultaneous coverage, both based on the parametric bootstrap approach proposed by \citet{pini2017interval} and on the critical value functions approach proposed by \citet{liebl2023fast}.

Compared to the classical causal inference framework where the outcome is real-valued, the problem we are tackling here requires additional structure. In particular, we need to introduce more notation and assumptions. First, for any $t\in\mathcal{T}$, we denote the one-dimensional projection of $\beta$ and $\drfos$ at $t$ as $\beta(t)$ and $\drfos(t)$, respectively. Similarly, we denote the one-dimensional projection of $\influence{\data}$ and $\hatinfluence{\data}$ at $t$ as $\influence{\data;t}$ and $\hatinfluence{\data;t}$.
Next, we consider the following assumptions.

\begin{assumption}[Inference]
\label{ass:estimate}
    Let the number of cross-fitting folds be fixed at $J$, and assume that: 
    \begin{enumerate}[label=\textbf{\alph*.}]
        \item For each $j\in\{1,\dots,J\}$, and for every one-dimensional projection $\influence{\data;t}$, one has $$ \richhatinfluence{[-j]}{\data;t} \Lto{2} \influence{\data;t}\,.$$
        \item For every one-dimensional projection 
        $\influence{\data;t}$, one has $$
        \sqrt{n}
        \sum_{j=1}^J R_2^{[j]} = o_\mathbbmss{P}(1) \,,
        $$ 
        where 
        $R_2^{[j]} = \richdrfos{[-j]}(t) - \beta(t) + \int \richhatinfluence{[-j]}{\data;t}\,d\mathbbmss{P}\,.$
        \item Given $\xi>0$, $\hat{\pi}^{(a)}$ is bounded away from $\xi$ and $1-\xi$ with probability 1.
        \item For any $\delta>0$, the functional outcome satisfies
        \begin{equation}
            \EE{\sup_{|s-t|\leq \delta} \left|\mathcal{Y}(s)-\mathcal{Y}(t)\right|} \leq L \delta^\tau\,
        \end{equation}
        for some constants $L\geq0$ and $\tau>0$.
        \item For any $\delta>0$, and for $a\in\{0,1\}$, the \textit{estimated} regression function satisfies
        \begin{equation}
            \EE{\sup_{|s-t|\leq \delta} \left|\hat{\mu}^{(a)}(s)-\hat{\mu}^{(a)}(t)\right|} \leq L^{(a)} \delta^\tau\,
        \end{equation}
        for some constants $L^{(a)}\geq0$ and $\tau>0$.
    \end{enumerate}
\end{assumption}

\begin{remark}
    Taking the number of folds as fixed to rule out undesired asymptotic behaviors of cross-fitting, as well as Assumptions \textbf{a}, \textbf{b} and \textbf{c}, are standard in causal inference -- see \citet{kennedy2022semiparametric}. In particular, Assumptions \textbf{a}, \textbf{b}, and \textbf{c}
    are required to control the empirical process and the remainder term arising from the Von Mises expansion in Lemma~\ref{lem:pan}. 
    Notice that we do not require any condition on the influence functions $\influence{\data}$. We only need first-order assumptions on the consistency of influence function estimators $\hatinfluence{\data;t}$, as is standard in the literature with scalar outcomes. 
    See Remark~\ref{rem:remainder} below for additional comments on assumption \textbf{b}.
    Assumption \textbf{d} is much lighter than what is usually required in the functional data literature. For example, the assumption of \citet{dette2020functional} on Lipschitz sample paths, i.e.~$\left|\mathcal{Y}(s)-\mathcal{Y}(t)\right|\leq L'|s-t|$, clearly implies ours with $L'=L$ and $\tau=1$. Moreover, we stress the fact that we do not require the functional outcome to be in a Hilbert space.  
    Finally, Assumption \textbf{e} is a regularity condition on the regression function estimator. This can be superfluous depending on the methodology at hand. See Example~\ref{example:ols} below, where assumption \textbf{d} implies \textbf{e}. 
\end{remark}

\begin{remark}
\label{rem:remainder}
    The remainder term in Assumption \textbf{b} can be bounded by the product of the norms of the distance between true and estimated nuisance functions. 
    In fact, by direct evaluation of $R_2$, we can show that 
    \begin{equation}
    \begin{split}
        R_2 &= \int \left|\frac{1}{\pi^{(1)}(X)} - \frac{1}{\hat{\pi}^{(1)}(X)} \right|\, \left| \mu^{(1)}(X;t) - \hat{\mu}^{(1)}(X;t) \right| \pi^{(1)}(X)\,d\mathbbmss{P} \\
        &- \int \left|\frac{1}{1 - \pi^{(1)}(X)} - \frac{1}{1 -\hat{\pi}^{(1)}(X)} \right|\, \left| \mu^{(0)}(X;t) - \hat{\mu}^{(0)}(X;t) \right| (1-\pi^{(1)}(X))\,d\mathbbmss{P}\,.
    \end{split}
    \end{equation}
    Given that, by Assumption \textbf{c}, $\hat{\pi}^{(1)}(X) \geq \varepsilon$ with probability 1, by Cauchy-Schwarz we have
    \begin{equation}
        \begin{split}
            |R_2| &\leq \frac{1}{\varepsilon} \int \left|\pi^{(1)}(X) - \hat{\pi}^{(1)}(X) \right|\, \left| \mu^{(1)}(X;t) - \hat{\mu}^{(1)}(X;t) \right| \,d\mathbbmss{P} \\
            &- \frac{1}{\varepsilon}\int \left|\hat{\pi}^{(1)}(X) - \pi^{(1)}(X) \right|\, \left| \mu^{(0)}(X;t) - \hat{\mu}^{(0)}(X;t) \right| \,d\mathbbmss{P} \\
            &\leq \frac{1}{\varepsilon}\EE{\pi^{(1)}(X) - \hat{\pi}^{(1)}(X)}\left(\EE{\mu^{(1)}(X;t) - \hat{\mu}^{(1)}(X;t)} - \EE{\mu^{(0)}(X;t) - \hat{\mu}^{(0)}(X;t)
            } \right)\,.
        \end{split}
    \end{equation}
    
    The fact that the remainder is bounded by the product of the norms of the distance between true and estimated quantities is also referred to as \textit{mixed bias}, \textit{product bias}, or \textit{strong} double-robustness property \citep{wager2024causal}, as opposed to the \textit{weak} double-robustness property presented in the main text (see Section~\ref{subsec:estimate}). This property implies that the rate at which the remainder converges to 0 is given by the product of the rates at which the estimators converge to truth. This turns out to be particularly useful in our context. In fact, in several applications it would be reasonable to assume that the estimation of the regression functions $\mu^{(a)}(X)$, $a\in\{0,1\}$ is a more complex task than the estimation of $\pi^{(1)}(X)$. Indeed, $\mu^{(a)}$ maps a vector of size $p$ into a function in $C(\mathcal{T})$; $\pi^{(a)}$ maps the same vector into a variable supported in $(0,1)$. If the rate of convergence of $\hat{\pi}^{(1)}$ were parametric (i.e.~$n^{-1/2}$), then \textit{any} rate of convergence for $\hat{\mu}^{(a)}$ would suffice to guarantee the convergence of \DRFOS{} to a Gaussian process. See \citet{kennedy2022semiparametric} for additional comments on this problem.
\end{remark}

\begin{example}[Assumption \textbf{e} can be superfluous]
\label{example:ols}
    Let $\mathbb{Y} = (\mathcal{Y}_1, \dots, \mathcal{Y}_n)^T$ and, similarly, $\mathbb{X} = (X_1,\dots,X_n)^T$. The standard function-on-scalar OLS estimator is given by 
    \begin{equation}
        \hat{\theta}_{\texttt{OLS}} = (\mathbb{X}^T\mathbb{X})^{-1} \mathbb{X}^T \mathbb{Y}\,.
    \end{equation}
    The estimated regression function is thus
    \begin{equation}
        \hat{\mu}(X) = X\hat{\theta}_{\texttt{OLS}}\,.
    \end{equation}
    Clearly, with a slight abuse of notation, we have that
    \begin{equation}
    \begin{split}
        \EE{\sup_{|s-t|\leq \delta} \left|\hat{\mu}(s)-\hat{\mu}(t)\right|} &= \EE{X \EE{\sup_{|s-t|\leq \delta} \left| \hat{\theta}(s)-\hat{\theta}(t)\right| \,\Big|\,X}} \\
        &=\EE{X(\mathbb{X}^T\mathbb{X})^{-1}\mathbb{X}^T \EE{\sup_{|s-t|\leq \delta} \left| \mathbb{Y}(s)- \mathbb{Y}(t)\right|}} \\
        &\leq L'' |s-t|^\tau\,,
    \end{split}
    \end{equation}
     where $L'' = L\EE{X(\mathbb{X}^T\mathbb{X})^{-1}\mathbb{X}^T}$.
    An extension to linear smoothers goes through a similar line of reasoning. \lorenzo{We stress that Assumption~\ref{ass:estimate}\textbf{d} imposes regularity on the outcome process $\mathcal{Y}$ itself, independent of the estimation procedure. In contrast, Assumption~\ref{ass:estimate}\textbf{e} is a condition on the estimated regression functions $\hat{\mu}^{(a)}$. In practice, methods like linear smoothers will ensure that $\hat{\mu}^{(a)}$ inherits sufficient smoothness from the estimation method itself.}
\end{example}

\begin{lemma}[Asymptotic normality of finite dimensional projections]
\label{lem:pan}
    Let $k\in\NN$ and $t_1,\dots,t_k\in\mathcal{T}$ be fixed. 
    Define $\hat{B}_{t_1,\dots,t_k} = \left(\drfos(t_1),\dots,\drfos(t_k)\right)^T$ and $B_{t_1,\dots,t_k} = \left(\beta(t_1),\dots,\beta(t_k) \right)^T$. Under Assumptions~\ref{ass:identify} (identifiability) and \ref{ass:estimate} (inference), one has
    \begin{equation}
        \sqrt{n} \left( \hat{B}_{t_1,\dots,t_k} - B_{t_1,\dots,t_k} \right) \dto \Normal{0}{\Sigma_{t_1,\dots,t_k}}\,,
    \end{equation}
    where $\projection{t_1,\dots,t_k}{\data} = (\influence{\data;t_1}, \dots, \influence{\data;t_k})^T$ and $\Sigma_{t_1,\dots,t_k} = \EE{ \projection{t_1,\dots,t_k}{\data} \projection{t_1,\dots,t_k}{\data}^T}\,$.
\end{lemma}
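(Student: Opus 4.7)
My plan is to reduce the multivariate statement to a univariate CLT via the Cramér--Wold device: for any fixed vector $c = (c_1,\dots,c_k)^T \in \RR^k$, I want to show
$$
\sqrt{n}\, c^T \bigl( \hat{B}_{t_1,\dots,t_k} - B_{t_1,\dots,t_k} \bigr) \dto \Normal{0}{c^T \Sigma_{t_1,\dots,t_k}\, c}\,.
$$
By linearity, this is equivalent to controlling $\sqrt{n}\sum_{\ell=1}^k c_\ell (\drfos(t_\ell) - \beta(t_\ell))$. Since each projection $\drfos(t_\ell)$ has exactly the structure of a one-dimensional cross-fitted AIPW estimator for the scalar target $\beta(t_\ell)$, the strategy is to expose each coordinate through the standard Von Mises expansion and then invoke a multivariate CLT on the i.i.d. vector of influence-function projections.

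The key decomposition is, for each fold $j$ and each point $t$,
$$
\richdrfos{[j]}(t) - \beta(t) = \bigl(\mathbbmss{P}_n^{[j]} - \mathbbmss{P}\bigr)\bigl[\varphi(\data;t)\bigr] + T_1^{[j]}(t) + R_2^{[j]}(t)\,,
$$
where $T_1^{[j]}(t) = (\mathbbmss{P}_n^{[j]} - \mathbbmss{P})[\richhatinfluence{[-j]}{\data;t} - \influence{\data;t}]$ is the empirical process term and $R_2^{[j]}(t)$ matches the remainder defined in Assumption \ref{ass:estimate}\textbf{b}. The first step is to handle $T_1^{[j]}(t)$: conditional on the training sample $\hat{\mathbbmss{P}}^{[-j]}$, this is a centered mean of i.i.d.\ variables with conditional variance of order $n^{-1}\,\EE{(\richhatinfluence{[-j]}{\data;t} - \influence{\data;t})^2}$, which by Assumption \ref{ass:estimate}\textbf{a} (the $\mathbbmss{L}^2$ consistency) tends to $0$; a Chebyshev bound followed by dominated convergence (using positivity from Assumption \ref{ass:estimate}\textbf{c} to keep the inverse-probability weights bounded, hence uniform integrability) yields $\sqrt n\, T_1^{[j]}(t) = o_\mathbbmss{P}(1)$. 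The second step invokes Assumption \ref{ass:estimate}\textbf{b} directly to conclude $\sqrt n \sum_{j=1}^J R_2^{[j]}(t) = o_\mathbbmss{P}(1)$. Since $\frac{1}{J}\sum_{j=1}^J \mathbbmss{P}_n^{[j]} = \mathbbmss{P}_n$, averaging over folds collapses the leading term to $\sqrt{n}\,\mathbbmss{P}_n[\influence{\data;t}]$, giving the pointwise expansion
$$
\sqrt{n}\bigl(\drfos(t) - \beta(t)\bigr) = \sqrt{n}\,\mathbbmss{P}_n[\influence{\data;t}] + o_\mathbbmss{P}(1)\,.
$$

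Applying this expansion at each of $t_1,\dots,t_k$ and forming the linear combination reduces the Cramér--Wold target to a sum of i.i.d.\ centered random variables $\sum_{\ell=1}^k c_\ell \influence{\data;t_\ell}$, to which the ordinary univariate CLT applies, with limiting variance $c^T \Sigma_{t_1,\dots,t_k}\, c$ by definition of $\Sigma_{t_1,\dots,t_k}$. Finite second moments follow from Assumption \ref{ass:estimate}\textbf{c} (which makes $1/\pi^{(a)}$ bounded) combined with Assumption \ref{ass:estimate}\textbf{d} at $\delta$ equal to the diameter of $\mathcal{T}$, which gives $\EE{\mathcal{Y}(t)^2}<\infty$ for each $t$. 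I expect the main obstacle to be the rigorous handling of the conditional empirical process term: one must carefully condition on the training folds and argue that vanishing conditional variance implies vanishing unconditional probability, a step that requires the positivity assumption to uniformly bound the augmentation term and thereby justify a dominated convergence argument. Once that is in place, the remainder of the proof is a routine assembly of the Von Mises expansion, cross-fitting aggregation, and Cramér--Wold.
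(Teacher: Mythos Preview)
Your approach is essentially the same as the paper's: both proofs rest on the Von~Mises decomposition into a leading i.i.d.\ influence-function sum, an empirical process term controlled via conditioning on the training fold plus Chebyshev and the $\mathbbmss{L}^2$ consistency of Assumption~\ref{ass:estimate}\textbf{a}, and the remainder killed by Assumption~\ref{ass:estimate}\textbf{b}. The only cosmetic difference is that the paper applies the multivariate CLT directly to the vector $\projection{t_1,\dots,t_k}{\data}$ and checks the empirical process vanishes component-wise, whereas you route through Cramér--Wold first; one small caveat is that your justification of finite second moments from Assumption~\ref{ass:estimate}\textbf{d} does not go through (that assumption controls $\mathbbmss{L}^1$ increments, not $\mathbbmss{L}^2$ moments of $\mathcal{Y}(t)$), but the paper simply takes finiteness of $\Sigma_{t_1,\dots,t_k}$ as implicit in the statement rather than deriving it.
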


The claim in Lemma~\ref{lem:pan} is similar to the one in Theorem 5.31 of \citet{van2000asymptotic}, with the notable difference that, by exploiting cross-fitting, we greatly reduce the Donsker-type assumptions on population and estimated quantities. Our claim also resembles Lemma C.8 in \citet{martinez2023efficient}. A similar, more general, result is also found in Lemma 3 of \citet{kennedy2023semiparametric}. As stated in the following Corollary, Lemma \ref{lem:pan} also guarantees pointwise asymptotic normality.

\begin{corollary}[Pointwise Asymptotic normality]
\label{cor:pant}
    Let $t\in\mathcal{T}$ be fixed. Under the assumptions of Lemma \ref{lem:pan}, one has
    \begin{equation}
        \sqrt{n} \left( \drfos(t) - \beta(t) \right) \dto \Normal{0}{\Sigma_t}\,,
    \end{equation}
    where $\Sigma_t = \VV{\influence{\data;t}} = \EE{\influencesquared{\data;t}}\,$.
\end{corollary}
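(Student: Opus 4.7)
The plan is to derive Corollary~\ref{cor:pant} as the one-dimensional specialization of Lemma~\ref{lem:pan}. Setting $k=1$ and $t_1 = t$ in that lemma, the projection vector $\projection{t_1,\dots,t_k}{\data}$ collapses to the scalar $\influence{\data;t}$, $\hat{B}_{t_1,\dots,t_k}$ becomes $\drfos(t)$, and the asymptotic covariance matrix $\Sigma_{t_1,\dots,t_k}$ reduces to the scalar $\EE{\influencesquared{\data;t}}$. Lemma~\ref{lem:pan} then directly yields $\sqrt{n}\bigl(\drfos(t) - \beta(t)\bigr) \dto \Normal{0}{\EE{\influencesquared{\data;t}}}$, which is the stated limit.

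The only remaining piece is the identification $\EE{\influencesquared{\data;t}} = \VV{\influence{\data;t}}$, which reduces to showing that the pointwise influence function is mean-zero. By Lemma~\ref{th:equivalence}, $\beta = \EE{\gamma^{(1)}(\data) - \gamma^{(0)}(\data)}$ as an equality in $C(\mathcal{T})$; evaluating both sides at the fixed point $t$ gives $\beta(t) = \EE{\gamma^{(1)}(\data;t) - \gamma^{(0)}(\data;t)}$, so the centering in $\influence{\data;t} = \gamma^{(1)}(\data;t) - \gamma^{(0)}(\data;t) - \beta(t)$ implies $\EE{\influence{\data;t}} = 0$. Hence the second moment and the variance coincide, closing the argument.

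In truth, there is no genuine obstacle here: this is a corollary in the strict sense, and the heavy lifting has already been done in Lemma~\ref{lem:pan}. The only point worth a moment of care is the mean-zero identification, which is automatic once one pairs the definition of $\influence{\data;t}$ with the pointwise evaluation of Lemma~\ref{th:equivalence}.
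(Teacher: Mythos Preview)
Your proposal is correct and follows exactly the approach implicit in the paper: Corollary~\ref{cor:pant} is obtained by specializing Lemma~\ref{lem:pan} to $k=1$, and the identification $\EE{\influencesquared{\data;t}} = \VV{\influence{\data;t}}$ via the mean-zero property of $\influence{\data;t}$ (from Lemma~\ref{th:equivalence}) is the only small detail to check, which you handle cleanly.
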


We are now ready to present the main result of this Section, which will be pivotal in the construction of simultaneous confidence bands required to perform inference. It informally states that the asymptotic distribution of $\drfos$ is a Gaussian process. 

\begin{theorem}[Convergence to Gaussian process]
\label{th:gp}
    Under Assumptions \ref{ass:identify} (identifiabiliy) and \ref{ass:estimate} (inference), one has
    \begin{equation}
        \sqrt{n}\left(\drfos - \beta \right) \dto \mathcal{GP}(0, \Sigma)\,,
    \end{equation}
    where $\Sigma(s,t) = \EE{\varphi(\data;s)\varphi(\data;t)}\,$. 
\end{theorem}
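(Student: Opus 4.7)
The plan is to reduce Theorem~\ref{th:gp} to a functional central limit theorem for the oracle empirical process of the true influence function, plus two remainder terms that vanish uniformly in the sup-norm. For each fold $j$, a von~Mises expansion of $\richdrfos{[j]}(t)-\beta(t)$ around the truth, combined with an add--subtract manipulation that swaps $\richhatinfluence{[-j]}{\data;t}$ for $\influence{\data;t}$, produces the cross-fit decomposition
\begin{equation*}
\sqrt{n}\bigl(\drfos(t)-\beta(t)\bigr)
= \sqrt{n}(\mathbbmss{P}_n-\mathbbmss{P})\influence{\data;t}
+ \frac{\sqrt{n}}{J}\sum_{j=1}^{J}(\mathbbmss{P}_n^{[j]}-\mathbbmss{P})\bigl[\richhatinfluence{[-j]}{\data;t}-\influence{\data;t}\bigr]
+ \frac{\sqrt{n}}{J}\sum_{j=1}^{J} R_2^{[j]}(t),
\end{equation*}
where $R_2^{[j]}$ is the remainder appearing in Assumption~\ref{ass:estimate}\textbf{b}. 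Since $(C(\mathcal{T}),\|\cdot\|_\infty)$ is a separable Banach space, Slutsky's theorem applies: it suffices to show the first term converges weakly to $\mathcal{GP}(0,\Sigma)$ in $C(\mathcal{T})$ and that the other two are $o_{\mathbbmss{P}}(1)$ in the sup-norm.

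For the first, main term I would verify the hypotheses of the central limit theorem in $(C(\mathcal{T}),\|\cdot\|_\infty)$ proved in \citet{dette2020functional}: $\influence{\data;\cdot}$ has continuous sample paths, finite pointwise second moments (already appearing implicitly in Lemma~\ref{lem:pan}), and a Lipschitz-in-mean modulus $\EE{\sup_{|s-t|\leq\delta}|\influence{\data;s}-\influence{\data;t}|}\leq C\delta$. The last property follows by combining Assumption~\ref{ass:estimate}\textbf{d} on $\mathcal{Y}$, the fact that the true $\mu^{(a)}$ inherits the same modulus because it is a conditional expectation of $\mathcal{Y}^{(a)}$ (and conditional expectation contracts the supremum in expectation via a tower argument), and the positivity bound on $\pi^{(a)}$ implicit in Assumption~\ref{ass:identify}\textbf{c} together with Assumption~\ref{ass:estimate}\textbf{c}. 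The Dette et al.\ theorem then delivers $\sqrt{n}(\mathbbmss{P}_n-\mathbbmss{P})\influence{\data;\cdot}\dto\mathcal{GP}(0,\Sigma)$ directly in $C(\mathcal{T})$, with the stated covariance $\Sigma(s,t)=\EE{\influence{\data;s}\influence{\data;t}}$.

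What remains is to show the middle empirical-process error and the remainder are $o_{\mathbbmss{P}}(1)$ uniformly in $t$. Pointwise negligibility comes essentially for free: conditioning on $\hat{\mathbbmss{P}}^{[-j]}$, cross-fitting makes the fold-$j$ sum a mean-zero average of observations independent of the nuisances, with conditional variance of order $J\,\EE{(\richhatinfluence{[-j]}{\data;t}-\influence{\data;t})^{2}\mid\hat{\mathbbmss{P}}^{[-j]}}=o_{\mathbbmss{P}}(1)$ by Assumption~\ref{ass:estimate}\textbf{a}, and Assumption~\ref{ass:estimate}\textbf{b} provides pointwise $o_{\mathbbmss{P}}(1)$ for the remainder. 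The main technical hurdle is lifting these pointwise statements to sup-norm statements over the continuum $\mathcal{T}$, and this is precisely where Assumptions~\ref{ass:estimate}\textbf{c}, \textbf{d} and \textbf{e} do the heavy lifting: together they force both $t\mapsto\richhatinfluence{[-j]}{\data;t}-\influence{\data;t}$ and $t\mapsto R_2^{[j]}(t)$ to satisfy Lipschitz-in-mean bounds whose constants remain stable in $n$. A standard Arzel\`a--Ascoli / chaining argument -- tile $\mathcal{T}$ by a finite $\delta$-net, apply the pointwise negligibility at the net, and control the between-net-point oscillations via the Lipschitz moduli -- converts pointwise $o_{\mathbbmss{P}}(1)$ into $\sup_{t\in\mathcal{T}}|\cdot|=o_{\mathbbmss{P}}(1)$ for each error term. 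The attractive feature of this route is that it avoids Donsker-type complexity conditions on the estimated nuisance class, relying instead on the sample-path regularity built into Assumptions~\textbf{d} and~\textbf{e}. Slutsky then delivers $\sqrt{n}(\drfos-\beta)\dto\mathcal{GP}(0,\Sigma)$ in $C(\mathcal{T})$.
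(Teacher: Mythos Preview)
Your approach differs structurally from the paper's. The paper does not decompose $\sqrt{n}(\drfos-\beta)$ into an oracle piece plus sup-norm-negligible errors. Instead it invokes Theorem~7.5 of \citet{billingsley2013convergence} directly on the full process: convergence of finite-dimensional distributions is supplied by Lemma~\ref{lem:pan} (where the von~Mises decomposition appears, but only \emph{pointwise}, so cross-fitting together with Assumptions~\ref{ass:estimate}\textbf{a}--\textbf{b} disposes of the empirical-process and remainder terms coordinate by coordinate), and tightness is argued by bounding the modulus of continuity $w(\drfos;\delta)$ via the triangle inequality and Markov's inequality, using Assumptions~\ref{ass:estimate}\textbf{c}, \textbf{d}, \textbf{e} on $\hat\pi$, $\mathcal{Y}$ and $\hat\mu^{(a)}$. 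The point is that the error terms never have to be shown $o_{\mathbbmss{P}}(1)$ \emph{uniformly} in $t$; that burden is shifted entirely into the finite-dimensional lemma, and tightness is handled on the undecomposed estimator.

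Your route carries a real gap at the step you label ``standard.'' The chaining argument needs the between-net oscillation of the $\sqrt{n}$-scaled error processes to be small. But Assumptions~\textbf{d}--\textbf{e} give Lipschitz-in-mean bounds with constants that do \emph{not} shrink with $n$: crudely, $\EE{\sup_{|s-t|\le\delta}|\richhatinfluence{[-j]}{\data;s}-\richhatinfluence{[-j]}{\data;t}-\influence{\data;s}+\influence{\data;t}|}\le C\delta$ with $C$ fixed. After the $\sqrt{n}$ scaling the oscillation bound is $O(\sqrt{n}\,\delta)$, which forces $\delta_n=o(n^{-1/2})$; the net then has order $\sqrt{n}$ points, and the pointwise $o_{\mathbbmss{P}}(1)$ furnished by Assumption~\textbf{a} (a second-moment rate only, with no tail control) does not survive a union over a growing grid. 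The same obstruction hits $\sqrt{n}R_2^{[j]}(t)$: Assumption~\textbf{b} is stated pointwise, and $R_2^{[j]}$ inherits only an $O(\delta)$ modulus with a fixed constant. Without either a Donsker condition on the error class or a modulus that vanishes with $n$, the lift from pointwise to uniform does not go through as you claim. This is precisely the difficulty the paper sidesteps by keeping the decomposition inside the finite-dimensional argument.
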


\begin{remark}
    Our CLT can be extended using the tools introduced in \citet{dette2020functional} to accommodate dependent and non-identically distributed data. However, 
    we note that restricting ourselves to the i.i.d.~case, as presented here, 
    allows us to employ a less stringent condition on the continuity of sample paths. Also, 
    since our objective is to set the stage for 
    extending standard causal inference approaches 
    (which usually 
    assume i.i.d.~data)
    to functional data analysis,
    considering the i.i.d.~case seems fully justified.
\end{remark}

\begin{remark}
    Inference for the functional average treatment effect using \DRFOS{}, by exploiting influence functions and cross-splitting, requires very mild assumptions compared to the proposal of \citet{liu2024double}. In fact, we make virtually no assumption about the shape and complexity of the population regression function and propensity score. Instead, \citet{liu2024double} impose strong modeling constraints on the regression function (which is assumed to be linear) and on the propensity score (which is assumed to be logistic). They also require the usual regularity conditions for asymptotic normality of M-estimators. Moreover, in our set of assumptions for inference, we only require that functional outcomes satisfy an expected Hölder condition, while \citet{liu2024double} require the much more stringent assumption that outcomes be defined in a Hilbert space, where standard asymptotic theory is applicable. 
\end{remark}

Endowed with the previous result, we can now build confidence bands for inference. An elegant pivotal approach, borrowed by \citet{liebl2023fast}, requires the estimation of a critical value function $\hat{u}^*_{\alpha/2}$, which guarantees control over the false positive rate. We refer the interested reader to \citet{liebl2023fast} for additional details and note that, while this approach possesses a variety of virtues, it also requires an additional assumption on the covariance of the process. 

\begin{proposition}[Simultaneous Coverage]
    Assume that the estimated covariance function $\hat\Sigma$ is one-time differentiable, and let the function $\hat{\sigma}^2$ denote its main diagonal.
    The $(1-\alpha)$ confidence band
    \begin{equation}
        C_\alpha = [\drfos^l, \drfos^u] = \drfos \pm 
        \frac{1}{\sqrt{n}}
        \hat{u}_{\alpha/2}^*\hat{\sigma}
    \end{equation}
    provides simultaneous coverage.
\end{proposition}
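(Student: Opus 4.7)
The proof plan is to reduce the simultaneous coverage statement to the critical value function machinery of \citet{liebl2023fast}, using Theorem~\ref{th:gp} as the key weak-convergence input. By Theorem~\ref{th:gp}, one has $\sqrt{n}(\drfos - \beta) \dto W$ in $(C(\mathcal{T}),\|\cdot\|)$, where $W\sim\mathcal{GP}(0,\Sigma)$. Let $\sigma(t)=\sqrt{\Sigma(t,t)}$, which is positive and bounded on $\mathcal{T}$ under mild nondegeneracy of the influence function; the studentized limit is then $Z:=W/\sigma$.

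First, I would establish uniform consistency $\sup_{t\in\mathcal{T}}|\hat{\sigma}(t)-\sigma(t)|=o_{\mathbbmss{P}}(1)$. The natural plug-in variance $\hat{\Sigma}(s,t)=\mathbbmss{P}_n[\hatinfluence{\data;s}\hatinfluence{\data;t}]$ is pointwise $\mathbbmss{L}^2$-consistent by Assumption~\ref{ass:estimate}(\textbf{a}); the modulus-of-continuity controls from Assumptions~\ref{ass:estimate}(\textbf{d})--(\textbf{e}) then upgrade pointwise to uniform consistency via a chaining argument parallel to the one underlying Theorem~\ref{th:gp}. Combined with Slutsky's theorem in $C(\mathcal{T})$, this yields
$$T_n:=\sqrt{n}\,(\drfos-\beta)/\hat{\sigma} \dto Z \quad \text{in } C(\mathcal{T}).$$

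Second, the hypothesis that $\hat{\Sigma}$ is one-time differentiable transfers, via consistency, to the limit covariance $\Sigma$, and by standard Gaussian process regularity this ensures that $Z$ admits almost surely continuously differentiable sample paths --- precisely the pathwise regularity required to invoke the critical value function construction of \citet{liebl2023fast}. Their main guarantee states that $\hat{u}^*_{\alpha/2}$ is calibrated so that
$$\mathbbmss{P}\left[\sup_{t\in\mathcal{T}}\frac{|T_n(t)|}{\hat{u}^*_{\alpha/2}(t)}\le 1\right]\longrightarrow 1-\alpha.$$
Unwinding this event by multiplying through by $\hat{\sigma}(t)/\sqrt{n}$ gives the equivalent statement $\mathbbmss{P}[\beta(t)\in C_\alpha(t) \text{ for all } t\in\mathcal{T}]\to 1-\alpha$, which is exactly simultaneous coverage.

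The main technical obstacle is the uniform consistency of $\hat{\sigma}$ in Step~1: pointwise consistency is immediate from Assumption~\ref{ass:estimate}(\textbf{a}), but promoting to sup-norm consistency requires the same delicate sup-norm machinery --- specifically the expected Lipschitz conditions of Assumptions~\ref{ass:estimate}(\textbf{d})--(\textbf{e}) together with an entropy bound on the index set $\mathcal{T}$ --- that makes our sup-norm CLT work in the first place, as opposed to the much simpler $\mathbbmss{L}^2$ control available in Hilbert-space treatments such as \citet{liu2024double}. Once uniform consistency of $\hat{\sigma}$ is secured, all remaining steps reduce to bookkeeping around the application of \citet{liebl2023fast}.
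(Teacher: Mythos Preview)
The paper does not supply a separate proof of this proposition in the Supplementary Material; the text surrounding the statement simply refers the reader to \citet{liebl2023fast} and presents the result as a direct consequence of Theorem~\ref{th:gp} combined with the critical value function machinery developed there. Your proposal is therefore already more detailed than anything the paper offers, and its overall structure --- weak convergence in $C(\mathcal{T})$ from Theorem~\ref{th:gp}, uniform consistency of the variance estimator, then invocation of \citet{liebl2023fast} --- is exactly the argument the paper gestures at.

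One step in your sketch is not quite right as stated: the claim that differentiability of $\hat{\Sigma}$ ``transfers, via consistency, to the limit covariance $\Sigma$'' fails in general, since uniform convergence preserves neither differentiability nor its absence. This is not fatal, however, because the construction in \citet{liebl2023fast} builds $\hat{u}^*_{\alpha/2}$ directly from the \emph{estimated} covariance, and their coverage guarantee requires smoothness of $\hat{\Sigma}$ (which is assumed here) together with consistency of $\hat{\Sigma}$ for $\Sigma$ --- not that $\Sigma$ itself inherit differentiability. Your identification of uniform sup-norm consistency of $\hat{\sigma}$ as the genuine technical obstacle is apt and goes beyond what the paper addresses; the paper simply does not engage with this point.
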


Another assumption-lean approach to construct simultaneous confidence bands is based on the parametric bootstrap \citep{pini2017interval}. This approach consists in repeatedly sampling from the process, and estimating quantiles accordingly, without any additional assumption on the covariance function. In symbols, the interval takes the form $C_\alpha = [\hat{\beta}_{\alpha/2}, \hat{\beta}_{1-\alpha/2}]$, where $\hat{\beta}_{\alpha/2}$ and $\hat{\beta}_{1-\alpha/2}$ indicate the estimated $\alpha/2$ and $1-\alpha/2$ quantiles, respectively. Given its ease of implementation, and the fact that it requires no additional assumptions, we adopt the parametric bootstrap approach in the following Sections. 


\begin{remark}
\lorenzo{Our analysis relies on the Banach space $C(\mathcal{T})$ equipped with the sup-norm. This choice is helpful to construct valid simultaneous confidence bands, as it ensures control over the maximum deviation of the estimated treatment effect across the entire domain $\mathcal{T}$. We acknowledge that this entails a stronger topological constraint compared to Hilbert space approaches (e.g., using the $\mathbbmss{L}^2$ norm), where convergence rates require fewer assumptions \citep{van1996weak}. As highlighted by \citet{singh2024kernel}, estimators prioritizing local smoothing can achieve higher pointwise accuracy and may adapt better to local irregularities than global sup-norm approaches. However, such methods generally do not provide the uniform coverage guarantees required for simultaneous inference. By assuming expected Hölder continuity (Assumption \ref{ass:estimate}\textbf{d}), \DRFOS{} secures a parametric $\sqrt{n}$-convergence rate in the supremum norm, effectively trading stronger regularity conditions for the ability to perform valid global inference.}
\end{remark}

\section{Simulation study}
\label{sec:simulate}

We employ synthetic data to investigate and compare performance of \DRFOS{} in both well-specified and misspecified scenarios. We first generate the true regression functions according to the model
\begin{equation}
    \mu^{(a)} = a\beta + \rho\,,\quad a\in\{0,1\}\,,
\end{equation}
where the FATE $\beta$ and the baseline function $\rho$ are drawn from a $0$ mean Gaussian process with a Matern covariance function \citep{cressie1999classes} of the form
\begin{equation}
\label{eq:matern}
        C(s,t) = \frac{\eta^2}{\Gamma(\nu)2^{\nu-1}}\left(\frac{\sqrt{2\nu}}{l} \lvert s-t \rvert\right)^\nu \chi_\nu\left(\frac{\sqrt{2\nu}}{l} \lvert s-t \rvert \right)\,.
\end{equation}
Here $\chi_\nu$ is a modified Bessel function, and we set the parameters to $l=0.25$, $\nu=3.5$, and $\eta^2=1$. 
We then generate the potential outcomes as
\begin{equation}
    \mathcal{Y}^{(a)}_i = \mu^{(a)} + \varepsilon_i\,,
\end{equation}
where the error $\varepsilon_i$ is again drawn from a $0$ mean Gaussian process with a Matern covariance function; here we set $l=0.25$, $\nu=2.5$, and $\eta = \VV{A_i\beta +\rho} / 10$, where with a slight abuse of notation $\VV{A_i\beta + \rho}$ indicates the global (pooled over time) variance of the signal $A_i\beta + \rho$. This roughly ensures that the pooled signal-to-noise ratio is approximately 1. The observed outcome is thus
\begin{equation}
    \mathcal{Y}_i = A_i \mathcal{Y}_i^{(1)} + (1-A_i) \mathcal{Y}_i^{(0)}\,,
\end{equation}
where $A_i\sim\text{Ber}\,(\Tilde{p})$, $\Tilde{p} = 0.5$.

We control the degree of misspecification in the estimated propensity score and in the estimated regression functions through the parameters $\alpha_\pi$ and $\alpha_\mu$, respectively. In particular we define the estimated propensity score as
\begin{equation}
    \hat{\pi}_i = \alpha_\pi U_i + (1-\alpha_\pi) \Tilde{p}\,,
\end{equation}
where $U_i$ is sampled from a balanced mixture of two normal distributions. The first normal has mean $0.2$ and standard deviation $0.1$; the second normal has mean $0.8$ and standard deviation $0.1$. The mixture is truncated in the interval $[0.02, 0.98]$.  
Similarly, we define the estimated regression functions as 
\begin{equation}
    \hat{\mu}^{(a)}_i = \alpha_\mu \mathcal{U}_i + (1-\alpha_\mu) \mu^{(a)}\,,
\end{equation}
where $\mathcal{U}_i$ is drawn from a $0$ mean Gaussian process with Matern covariance function with parameters $l=0.25$, $\nu=2.5$, $\eta^2=1$. This simulation structure allows us to study and compare the performance of the various FATE estimators under different misspecification scenarios, without concerning ourselves with potential biases introduced by the methods used to produce $\hat{\mu}^{(a)}$ and $\hat{\pi}$. In fact, $\alpha_\mu$ and $\alpha_\pi$ control the convex combinations between true and random quantities. The lower the values of $\alpha_\mu$ and $\alpha_\pi$, the stronger the correspondence between true quantities and estimates; the higher the values of $\alpha_\mu$ and $\alpha_\pi$, the more the estimates are pulled towards random directions, progressively injecting bias. In a way, $\alpha_\pi$ and $\alpha_\mu$ can be interpreted as proxies of the rate of convergence of $\hat{\pi}$ and $\hat{\mu}^{(a)}$ to their corresponding population quantities -- see \citet{das2024doubly} for a more explicit characterization.
For each simulation scenario, we set $n=5000$, $\alpha_\pi,\alpha_\mu \in \{0, 0.25, 0.5, 0.75, 1\}$ and run experiments with $50$ different seeds. 

We perform extensive simulations to compare \DRFOS{} with two commonly used FATE estimators: the \textit{outcome regression} estimator defined as 
\begin{equation}
\label{eq:or}
    \hat{\beta}_{\texttt{OR}} = \frac{1}{n} \sum_{i=1}^n \left[\hat{\mu}^{(1)}_i - \hat{\mu}^{(0)}_i \right]\,,
\end{equation}
and the \textit{inverse probability weighting} estimator defined as
\begin{equation}
\label{eq:ipw}
    \hat{\beta}_{\texttt{IPW}} = \frac{1}{n} \sum_{i=1}^n \left[ \frac{A_i\mathcal{Y}_i}{\hat{\pi}^{(1)}_i} - \frac{(1-A_i) \mathcal{Y}_i}{1 - \hat{\pi}^{(1)}_i}\right]\,.
\end{equation}
The latter estimator is well known in the scalar setting, but to the best of our knowledge it has never been analyzed in the functional data literature. In Appendix Section~\ref{suppsec:fipw} we therefore study its properties and asymptotic behavior.

We assess performance in terms of estimation accuracy and inferential coverage. To measure accuracy, we compute the mean squared error under the $\mathbbmss{L}^2$ distance between the true $\beta$ and its estimate $\hat{\beta}$, i.e.~$\text{MSE}(\hat{\beta}) = \int (\beta(t) - \hat{\beta}(t))^2\,dt$. To measure actual coverage, we evaluate the percentage $\Delta$ of the domain of $\beta$ that falls into the parametric bootstrap simultaneous coverage bands, i.e.~$\Delta = \int \one \{\beta(t)\in [\hat{\beta}^l(t), \hat{\beta}^u(t)] \} \,dt$.

\begin{figure}[t]
    \centering
    \includegraphics[width=\linewidth]{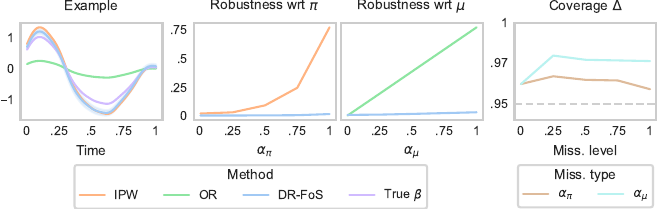}
    \caption{Simulation results based on the data generating process described in Section~\ref{sec:simulate}. The first panel (leftmost) shows an example of true FATE $\beta$ (purple) as generated by our simulation scheme, and the estimates provided by \texttt{IPW} (orange), \texttt{OR} (green) and \DRFOS{} (blue) -- together with simultaneous confidence bands around the latter. In this example we set $\alpha_\mu=0.25$ and $\alpha_\pi=0.75$. The second panel shows the average (across simulation replicates) estimation error as captured by the $\text{MSE}$ when the estimated propensity score is increasingly corrupted by random noise, and the regression functions are well-specified ($\alpha_\mu=0$, $\alpha_\pi\neq0$). While \DRFOS{} (blue) maintains excellent performance, \texttt{IPW} (orange) performs progressively worse, as expected. The third panel displays the opposite situation, where the estimated regression functions are increasingly corrupted by random noise and the propensity score is well-specified ($\alpha_\mu\neq0$, $\alpha_\pi=0$). Again, the performance of \DRFOS{} (blue) remains excellent, while the performance of \texttt{OR} (green) progressively deteriorates. Finally, the last panel (rightmost) shows the average coverage levels achieved under misspecifications of the regression functions (light blue, $\alpha_\pi=0$) and of the propensity score (brown, $\alpha_\mu=0$). The grey dashed line represents the nominal $95\%$ coverage level.}
    \label{fig:sim}
\end{figure}

Figure~\ref{fig:sim} shows an example of true FATE alongside its estimates using \texttt{OR}, \texttt{IPW}, and \DRFOS{}, for one specific scenario (first panel; the underlying simulation parameters are reported in the caption). Figure~\ref{fig:sim} also displays the performance of the various estimators based on the metrics described above. In a well-specified scenario, all estimators are comparable. However, when random noise corrupts either the propensity score (second panel) or the regression function (third panel), the performance of \texttt{IPW} and \texttt{OR} markedly deteriorates. In contrast, \DRFOS{} consistently achieves high accuracy in both cases, provided at least one of the two models -- the propensity score or the regression function -- is well-specified.
The fourth panel displays coverage performance. The simultaneous confidence bands effectively control the type I error, ensuring the nominal 95\% coverage level. However, under both misspecification scenarios, \DRFOS{} appears to exhibit some overcoverage. In Appendix Section~\ref{suppsec:simulations} we report results for additional simulation scenarios where both models are misspecified at the same time \lorenzo{and where we introduce discontinuities to evaluate the behavior of \DRFOS{} simultaneous confidence bands when the continuity assumption is not satisfied.} There, we also report further, more standard simulation scenarios where we explicitly model the data generating process and the nuisance functions, such as the linear case in which $\mu^{(a)}(X_i) = \sum_{j=1}^p X_{ij}\theta_j^{(a)}$ for some scalar variables $X_{ij}$ and functional coefficients $\theta_j^{(a)}$; notably, even in these cases, \DRFOS{} matches or outperforms the other estimators.

\section{SHARE application}
\label{sec:apply}
To demonstrate the use of \DRFOS{}, we apply it to the SHARE ({\em Survey of Health, Aging and Retirement in Europe}) dataset \citep{mannheim2005survey}. SHARE is a research infrastructure that aims to investigate the effects of health, social, economic, and environmental policies on the life course of European citizens \citep{bergmann2017survey, borsch2013data, borsch2020survey}. SHARE is a longitudinal study, where the same subjects are followed over multiple years. Specifically, eight surveys -- or ``waves'' -- were conducted from 2004 to 2020.

Specifically, we employ the proposed estimator to investigate the causal effect of chronic conditions on longitudinal indicators of quality of life. We first preprocess data following the steps described in \citet{boschi2024new}. We focus on the 1518 subjects who participated in at least seven out of the eight waves, ensuring a sufficient number of measurements for reliable curve estimation.
We investigate a subset of the variables from the EasySHARE dataset \citep{gruber2014generating}, a preprocessed version of the SHARE data. For each subject, we consider: two functional indicators of quality of life -- a mobility index and the Quality of Life Scale (CASP) \citep{hyde2003measure}, measured over 192 months; two chronic diseases -- high cholesterol and hypertension; and a variety of socio-demographic and healthcare factors (see Table~\ref{tab:covariates}). 
While some of these variables are characterized by values that change over time (e.g., CASP and mobility index) and are suitable for a functional representation, others are scalar (e.g., education years) or categorical (e.g., gender) and do not evolve over different waves. 
We smooth time-varying variables using cubic \emph{B-splines} with knots at each survey date and roughness penalty on the curves second derivative \citep{ramsay2005}. For each curve, the smoothing parameter is selected by minimizing the average generalized cross-validation error \citep{craven1978smoothing}. 
Although the survey dates and the amount of measurements may vary across subjects, the functional representation provides a natural imputation for missing values and facilitates the comparison of different statistical units across the entire temporal domain.

\begin{table}
    \centering
    \caption{Variables employed in the SHARE application study.}
    \label{tab:covariates}
    \begin{tabular}{lll}
        \toprule
        \textbf{Variable name} & \textbf{Type} & \textbf{Usage} \\
        \midrule
        Hypertension condition & Binary & Treatment \\
        High cholesterol & Binary & Treatment \\
        Age & Scalar & Covariate \\
        Years of education & Scalar & Covariate \\
        Score of first numeracy test & Scalar & Covariate \\
        Gender & Binary & Covariate \\
        Smoke & Binary & Covariate \\
        Vaccinations during childwood & Binary & Covariate \\
        Mobility index & Functional & Outcome \\
        CASP & Functional & Outcome \\
        \bottomrule
    \end{tabular}
\end{table}

Following the logic of the counterfactual model, we define the causal effect of a chronic disease on a quality of life indicator as the expected difference between the functional outcome if a subject had, or did not have, the chronic disease. Therefore, we first form treatment groups identifying subjects who present chronic conditions at the beginning of the study. Similarly, we form control groups identifying subjects who never develop chronic conditions throughout the study period. \lorenzo{We acknowledge that this retrospective definition of the control group introduces a form of selection or immortal-time bias since it utilizes information post-baseline (the absence of the condition throughout the study). Therefore, for this illustrative application, the estimand should be interpreted as the average difference in functional trajectories conditional on long-term disease-free status versus chronic disease status at baseline, rather than a causal effect of incident treatment. We acknowledge that more complex causal frameworks, like g-computation or Marginal Structural Models (MSMs), can deal with time-varying treatments or censoring by handling time-dependent confounding and estimating dynamic treatment effects \citep{robins1986new, tsiatis2006semiparametric}. However, these approaches cannot be readily applied to functional data.} For high cholesterol, we have 313 subjects in the treatment group and 747 in the control group. Similarly, for hypertension, we have 419 treated subjects and 577 controls.

For each combination of chronic disease and functional outcome, we then fit \DRFOS{} using a function-on-scalar least squares specification for the regression function $\hat{\mu}^{(a)}$, and logistic regression for the propensity score $\hat{\pi}^{(a)}$ (additional implementations with more sophisticated nuisance models can be found in Appendix Section~\ref{suppsec:more_real}). \lorenzo{No time-varying or post-baseline variables beyond the cohort definition are used in the propensity score or outcome models.} We employ cross-fitting with 5 balanced folds. Results are shown in Figure~\ref{fig:app}. The two chronic conditions both have a positive and statistically significant effect on the mobility index across the entire time period. Thus, chronic conditions adversely affect mobility (a higher mobility index corresponds to reduced mobility). Similarly, the effects of both chronic conditions on CASP are negative and statistically significant throughout the time period, highlighting a detrimental impact on quality of life as measured by this metric. 
The impacts of chronic conditions on functional outcomes also appear to increase in magnitude over time. Thus, as individuals age, the adverse effects of chronic conditions on their quality of life become progressively more pronounced. Taken together, these findings demonstrate the utility of \DRFOS{} in real-world applications, and its ability to enhance our understanding of complex phenomena.

\begin{figure}
    \centering
    \includegraphics[width=\linewidth]{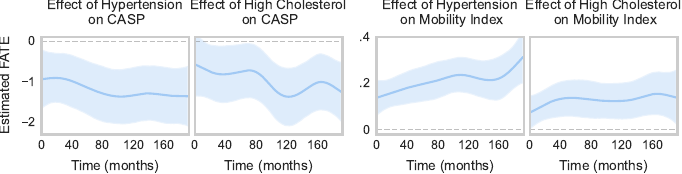}
    \caption{SHARE application results. Each panel displays a different estimated causal effect. Blue continuous lines correspond to \DRFOS{} estimates; blue bands are 95\% asymptotic simultaneous confidence bands obtained by repeatedly sampling from the Gaussian process; grey dotted horizontal lines correspond to 0.}
    \label{fig:app}
\end{figure}

\section{Conclusion}
\label{sec:conclude}

In this paper, we introduced \DRFOS{}, a novel estimator for the Functional Average Treatment Effect (FATE) in observational settings with functional outcomes. By leveraging double robustness, \DRFOS{} offers consistent estimation even when the outcome model, or alternatively the treatment assignment model, is misspecified. This desirable feature ensures that our methodology is robust to potential inaccuracies in the modeling assumptions, a critical advantage in practical applications where accurate, or even just reasonable model specifications are rarely guaranteed.

Our contributions are both theoretical and relevant for practitioners. We rigorously establish the theoretical properties of \DRFOS{}, proving its convergence to a Gaussian process under weak assumptions. This allows us to pursue pointwise inference, as well as to construct simultaneous confidence bands, providing a comprehensive inferential framework for functional treatment effects. The methodology is versatile, leveraging modern techniques in functional data analysis and causal inference while maintaining computational efficiency. 

Through extensive simulations, we show that \DRFOS{} outperforms commonly used alternatives, particularly under scenarios of model misspecification. These results highlight the practical utility of our approach in functional data settings, offering researchers a reliable tool for causal inference in complex, high-dimensional domains. Additionally, the application to the SHARE dataset underscores the flexibility and relevance of \DRFOS{} in addressing scientific questions that involve functional outcomes.

We envision several directions for future work. First, extending the methodology to more complex causal structures, such as scalar-on-function or function-on-function causal relationships, where the treatment is a continuous variable, represents a promising direction for further investigation \citep{tan2025causal}. Additionally, the assumptions underpinning the convergence to a Gaussian process could be further relaxed, particularly in the context of non i.i.d.~data or outcomes observed on non-standard domains. These extensions would broaden the applicability of \DRFOS{} and further strengthen its theoretical foundations.

In summary, \DRFOS{} provides a robust, efficient, and theoretically grounded approach for estimating causal effects in observational studies where outcomes are functional data. By bridging the gap between classic causal inference and functional data analysis, it sets the stage for further advancements in the analysis of complex and structured outcomes in modern scientific applications.


\acks{L.T.~wishes to thank Jing Lei for helpful discussions about the theoretical properties of the proposed estimator, and Antonio C.~Herling Ribeiro Jr.~for invaluable feedback on the simulation strategy. The work of F.C.~was partially supported by the Huck Institutes of the Life Sciences at Penn State, the L'EMbeDS Department of Excellence of the Sant'Anna School of Advanced Studies, and the SMaRT COnSTRUCT project (CUP J53C24001460006, as part of FAIR, PE0000013, CUP B53C22003630006, Italian National Recovery and Resilience Plan funded by NextGenerationEU).
The work of E.H.K.~was supported by the NSF CAREER Award 2047444. The work of M.R.~was partially supported by the NSF SES Award 1853209. 
}

\vskip 0.2in
\bibliography{sample}

@article{ecker2024causal,
  title={Causal inference with a functional outcome},
  author={Ecker, Kreske and de Luna, Xavier and Schelin, Lina},
  journal={Journal of the Royal Statistical Society Series C: Applied Statistics},
  volume={73},
  number={1},
  pages={221--240},
  year={2024},
  publisher={Oxford University Press US}
}

@article{liebl2023fast,
  title={Fast and fair simultaneous confidence bands for functional parameters},
  author={Liebl, Dominik and Reimherr, Matthew},
  journal={Journal of the Royal Statistical Society Series B: Statistical Methodology},
  volume={85},
  number={3},
  pages={842--868},
  year={2023},
  publisher={Oxford University Press US}
}

@article{lin2023causal,
  title={Causal inference on distribution functions},
  author={Lin, Zhenhua and Kong, Dehan and Wang, Linbo},
  journal={Journal of the Royal Statistical Society Series B: Statistical Methodology},
  volume={85},
  number={2},
  pages={378--398},
  year={2023},
  publisher={Oxford University Press US}
}

@article{pini2017interval,
  title={Interval-wise testing for functional data},
  author={Pini, Alessia and Vantini, Simone},
  journal={Journal of Nonparametric Statistics},
  volume={29},
  number={2},
  pages={407--424},
  year={2017},
  publisher={Taylor \& Francis}
}

@book{billingsley2013convergence,
  title={Convergence of probability measures},
  author={Billingsley, Patrick},
  year={2013},
  publisher={John Wiley \& Sons}
}

@article{dette2020functional,
  title={Functional data analysis in the Banach space of continuous functions},
  author={Dette, Holger and Kokot, Kevin and Aue, Alexander},
  journal={The Annals of Statistics},
  volume={48},
  number={2},
  pages={1168--1192},
  year={2020},
  publisher={JSTOR}
}

@book{bosq2000linear,
  title={Linear processes in function spaces: theory and applications},
  author={Bosq, Denis},
  volume={149},
  year={2000},
  publisher={Springer Science \& Business Media}
}

@book{ferraty2006nonparametric,
  title={Nonparametric functional data analysis},
  author={Ferraty, Fr{\'e}d{\'e}ric},
  year={2006},
  publisher={Springer}
}

@book{kokoszka2017introduction,
  title={Introduction to functional data analysis},
  author={Kokoszka, Piotr and Reimherr, Matthew},
  year={2017},
  publisher={CRC Press}
}

@book{ramsay2005,
	title={Functional data analysis},
	author={Ramsay, James O. and   Silverman, B. W.},
	edition={2},
	publisher={Springer},
	year={2005}
}

@article{boschi2024fasten,
  title={Fasten: an efficient adaptive method for feature selection and estimation in high-dimensional functional regressions},
  author={Boschi, Tobia and Testa, Lorenzo and Chiaromonte, Francesca and Reimherr, Matthew},
  journal={Journal of Computational and Graphical Statistics},
  number={just-accepted},
  pages={1--24},
  year={2024},
  publisher={Taylor \& Francis}
}

@article{cressie1999classes,
  title={Classes of nonseparable, spatio-temporal stationary covariance functions},
  author={Cressie, Noel and Huang, Hsin-Cheng},
  journal={Journal of the American Statistical Association},
  volume={94},
  number={448},
  pages={1330--1339},
  year={1999},
  publisher={Taylor \& Francis Group}
}

@article{kennedy2022semiparametric,
  title={Semiparametric doubly robust targeted double machine learning: a review},
  author={Kennedy, Edward H},
  journal={Handbook of Statistical Methods for Precision Medicine},
  pages={207--236},
  year={2024},
  publisher={Chapman and Hall/CRC}
}

@article{jeong2024identifying,
  title={Identifying sparse treatment effects},
  author={Jeong, Yujin and Fox, Emily and Johari, Ramesh},
  journal={arXiv preprint arXiv:2404.14644},
  year={2024}
}

@article{kennedy2019estimating,
  title={Estimating scaled treatment effects with multiple outcomes},
  author={Kennedy, Edward H and Kangovi, Shreya and Mitra, Nandita},
  journal={Statistical methods in medical research},
  volume={28},
  number={4},
  pages={1094--1104},
  year={2019},
  publisher={SAGE Publications Sage UK: London, England}
}

@article{kennedy2023semiparametric,
  title={Semiparametric counterfactual density estimation},
  author={Kennedy, Edward H and Balakrishnan, Sivaraman and Wasserman, LA},
  journal={Biometrika},
  volume={110},
  number={4},
  pages={875--896},
  year={2023},
  publisher={Oxford University Press}
}

@article{luedtke2024one,
  title={One-step estimation of differentiable Hilbert-valued parameters},
  author={Luedtke, Alex and Chung, Incheoul},
  journal={The Annals of Statistics},
  volume={52},
  number={4},
  pages={1534--1563},
  year={2024},
  publisher={Institute of Mathematical Statistics}
}

@article{kennedy2020optimal,
  title={Optimal doubly robust estimation of heterogeneous causal effects},
  author={Kennedy, Edward H and others},
  journal={arXiv preprint arXiv:2004.14497},
  volume={5},
  year={2020}
}

@article{boschi2021functional,
  title={Functional data analysis characterizes the shapes of the first COVID-19 epidemic wave in Italy},
  author={Boschi, Tobia and Di Iorio, Jacopo and Testa, Lorenzo and Cremona, Marzia A and Chiaromonte, Francesca},
  journal={Scientific reports},
  volume={11},
  number={1},
  pages={17054},
  year={2021},
  publisher={Nature Publishing Group UK London}
}

@article{boschi2023contrasting,
  title={Contrasting pre-vaccine COVID-19 waves in Italy through Functional Data Analysis},
  author={Boschi, Tobia and Di Iorio, Jacopo and Testa, Lorenzo and Cremona, Marzia A and Chiaromonte, Francesca},
  journal={arXiv preprint arXiv:2307.09820},
  year={2023}
}

@inproceedings{boschi2024new,
  title={A New Computationally Efficient Algorithm to solve Feature Selection for Functional Data Classification in High-dimensional Spaces},
  author={Boschi, Tobia and Bonin, Francesca and Ordonez-Hurtado, Rodrigo and Pascale, Alessandra and Epperlein, Jonathan P},
  booktitle={Forty-first International Conference on Machine Learning},
  year={2024}
}

@article{qi2018function,
  title={Function-on-function regression with thousands of predictive curves},
  author={Qi, Xin and Luo, Ruiyan},
  journal={Journal of Multivariate Analysis},
  volume={163},
  pages={51--66},
  year={2018},
  publisher={Elsevier}
}

@article{prosperi2020causal,
  title={Causal inference and counterfactual prediction in machine learning for actionable healthcare},
  author={Prosperi, Mattia and Guo, Yi and Sperrin, Matt and Koopman, James S and Min, Jae S and He, Xing and Rich, Shannan and Wang, Mo and Buchan, Iain E and Bian, Jiang},
  journal={Nature Machine Intelligence},
  volume={2},
  number={7},
  pages={369--375},
  year={2020},
  publisher={Nature Publishing Group UK London}
}

@article{varian2016causal,
  title={Causal inference in economics and marketing},
  author={Varian, Hal R},
  journal={Proceedings of the National Academy of Sciences},
  volume={113},
  number={27},
  pages={7310--7315},
  year={2016},
  publisher={National Acad Sciences}
}

@article{imbens2024causal,
  title={Causal inference in the social sciences},
  author={Imbens, Guido W},
  journal={Annual Review of Statistics and Its Application},
  volume={11},
  year={2024},
  publisher={Annual Reviews}
}

@article{van2006targeted,
  title={Targeted maximum likelihood learning},
  author={Van Der Laan, Mark J and Rubin, Daniel},
  journal={The international journal of biostatistics},
  volume={2},
  number={1},
  year={2006},
  publisher={De Gruyter}
}

@book{tsiatis2006semiparametric,
  title={Semiparametric theory and missing data},
  author={Tsiatis, Anastasios A},
  volume={4},
  year={2006},
  publisher={Springer}
}

@article{cremona2018iwtomics,
  title={IWTomics: testing high-resolution sequence-based ‘Omics’ data at multiple locations and scales},
  author={Cremona, Marzia A and Pini, Alessia and Cumbo, Fabio and Makova, Kateryna D and Chiaromonte, Francesca and Vantini, Simone},
  journal={Bioinformatics},
  volume={34},
  number={13},
  pages={2289--2291},
  year={2018},
  publisher={Oxford University Press}
}

@inproceedings{boschi2024fungcn,
  title     = {Functional Graph Convolutional Networks: A Unified Multi-task and Multi-modal Learning Framework to Facilitate Health and Social-Care Insights},
  author    = {Boschi, Tobia and Bonin, Francesca and Ordonez-Hurtado, Rodrigo and Rousseau, Cécile and Pascale, Alessandra and Dinsmore, John},
  booktitle = {Proceedings of the Thirty-Third International Joint Conference on
               Artificial Intelligence, {IJCAI-24}},
  publisher = {International Joint Conferences on Artificial Intelligence Organization},
  pages     = {7188--7196},
  year      = {2024},
}

@book{van2000asymptotic,
  title={Asymptotic statistics},
  author={Van der Vaart, Aad W},
  volume={3},
  year={2000},
  publisher={Cambridge university press}
}

@book{mannheim2005survey,
  title={The Survey of Health, Aging, and Retirement in Europe: Methodology},
  author={
Alcser, Kirsten H. and
Benson, Grant and 
Barsch-Supan, Axel and
Brugiavini, Agar and
Christelis, Dimitrios and 
Croda, Enrica and 
Das, Marcel and
de Luca, Guiseppe and 
Harkness, Janet and
Hesselius, Patrik and
Jappelli, Tullio and 
Jarges, Hendrik and
Kalwij, Adriaan and 
Kemperman, Marie-Louise and
Klevmarken, Anders and
Lipps, Oliver and
Paccagnella, Omar and
Padula, Mario and
Perrachi, Franco and
Rainato, Roberta and
van Soest, Arthur and
Swensson, Bengt and
Vis, Corrie and
Weber, Guglielmo and
Weerman, Bas
  },
  year={2005},
  publisher={Mannheim Research Institute for the Economics of Aging (MEA)}
}

@article{borsch2013data,
  title={Data resource profile: the Survey of Health, Ageing and Retirement in Europe (SHARE)},
  author={B{\"o}rsch-Supan, Axel and Brandt, Martina and Hunkler, Christian and Kneip, Thorsten and Korbmacher, Julie and Malter, Frederic and Schaan, Barbara and Stuck, Stephanie and Zuber, Sabrina},
  journal={International journal of epidemiology},
  volume={42},
  number={4},
  pages={992--1001},
  year={2013},
  publisher={Oxford University Press}
}

@article{bergmann2017survey,
  title={Survey participation in the survey of health, ageing and retirement in Europe (SHARE), Wave 1-6},
  author={Bergmann, Michael and Kneip, Thorsten and De Luca, Giuseppe and Scherpenzeel, Annette},
  journal={Munich: Munich Center for the Economics of Aging},
  year={2017}
}

@article{borsch2020survey,
  title={Survey of health, ageing and retirement in Europe (SHARE) wave 5},
  author={B{\"o}rsch-Supan, Axel},
  journal={Release version},
  volume={7},
  number={0},
  year={2020}
}

@techreport{gruber2014generating,
  title={Generating easySHARE: guidelines, structure, content and programming},
  author={Gruber, Stefan and Hunkler, Christian and Stuck, Stephanie},
  year={2014},
  institution={SHARE Working Paper Series 17-2014. Munich}
}

@article{craven1978smoothing,
  title={Smoothing noisy data with spline functions},
  author={Craven, Peter and Wahba, Grace},
  journal={Numerische mathematik},
  volume={31},
  number={4},
  pages={377--403},
  year={1978},
  publisher={Springer}
}

@article{liu2024double,
  title={Double robust estimation of functional outcomes with data missing at random},
  author={Liu, Xijia and Ecker, Kreske and Schelin, Lina and de Luna, Xavier},
  journal={arXiv preprint arXiv:2411.17224},
  year={2024}
}

@article{das2024doubly,
  title={Doubly robust capture-recapture methods for estimating population size},
  author={Das, Manjari and Kennedy, Edward H and Jewell, Nicholas P},
  journal={Journal of the American Statistical Association},
  volume={119},
  number={546},
  pages={1309--1321},
  year={2024},
  publisher={Taylor \& Francis}
}

@article{hyde2003measure,
  title={A measure of quality of life in early old age: the theory, development and properties of a needs satisfaction model (CASP-19)},
  author={Hyde, Martin and Wiggins, Richard D and Higgs, Paul and Blane, David B},
  journal={Aging \& mental health},
  volume={7},
  number={3},
  pages={186--194},
  year={2003},
  publisher={Taylor \& Francis}
}

@inproceedings{pavlyshenko2018using,
  title={Using stacking approaches for machine learning models},
  author={Pavlyshenko, Bohdan},
  booktitle={2018 IEEE second international conference on data stream mining \& processing (DSMP)},
  pages={255--258},
  year={2018},
  organization={IEEE}
}

@article{van2007super,
  title={Super learner},
  author={Van der Laan, Mark J and Polley, Eric C and Hubbard, Alan E},
  journal={Statistical applications in genetics and molecular biology},
  volume={6},
  number={1},
  year={2007},
  publisher={De Gruyter}
}

@misc{wager2024causal,
  title={Causal inference: A statistical learning approach},
  author={Wager, Stefan},
  year={2024},
  publisher={preparation}
}

@inproceedings{ieva2025enhancing,
  title={Enhancing Causal Inference in Functional Data: a Method for Estimating Time-Varying Causal Treatment Effects},
  author={Ieva, Francesca and Fontana, Nicole and Pivato, Carlo Andrea and Di Angelantonio, Emanuele and Secchi, Piercesare},
  booktitle={International Workshop on Functional and Operatorial Statistics},
  pages={285--293},
  year={2025},
  organization={Springer}
}

@article{martinez2023efficient,
  title={An efficient doubly-robust test for the kernel treatment effect},
  author={Martinez Taboada, Diego and Ramdas, Aaditya and Kennedy, Edward},
  journal={Advances in Neural Information Processing Systems},
  volume={36},
  pages={59924--59952},
  year={2023}
}

@article{lundborg2022conditional,
  title={Conditional independence testing in Hilbert spaces with applications to functional data analysis},
  author={Lundborg, Anton Rask and Shah, Rajen D and Peters, Jonas},
  journal={Journal of the Royal Statistical Society Series B: Statistical Methodology},
  volume={84},
  number={5},
  pages={1821--1850},
  year={2022},
  publisher={Oxford University Press}
}

@article{singh2024kernel,
  title={Kernel methods for causal functions: dose, heterogeneous and incremental response curves},
  author={Singh, Rahul and Xu, Liyuan and Gretton, Arthur},
  journal={Biometrika},
  volume={111},
  number={2},
  pages={497--516},
  year={2024},
  publisher={Oxford University Press}
}

@article{bruns2025augmented,
  title={Augmented balancing weights as linear regression},
  author={Bruns-Smith, David and Dukes, Oliver and Feller, Avi and Ogburn, Elizabeth L},
  journal={Journal of the Royal Statistical Society Series B: Statistical Methodology},
  pages={qkaf019},
  year={2025},
  publisher={Oxford University Press UK}
}

@article{raykov2025kernel,
  title={Kernel-based estimators for functional causal effects},
  author={Raykov, Yordan P and Luo, Hengrui and Strait, Justin D and KhudaBukhsh, Wasiur R},
  journal={arXiv preprint arXiv:2503.05024},
  year={2025}
}

@article{thind2020funcnn,
  title={Funcnn: An r package to fit deep neural networks using generalized input spaces},
  author={Thind, Barinder and Wu, Sidi and Groenewald, Richard and Cao, Jiguo},
  journal={arXiv preprint arXiv:2009.09111},
  year={2020}
}

@article{robins1986new,
  title={A new approach to causal inference in mortality studies with a sustained exposure period—application to control of the healthy worker survivor effect},
  author={Robins, James},
  journal={Mathematical modelling},
  volume={7},
  number={9-12},
  pages={1393--1512},
  year={1986},
  publisher={Elsevier}
}

@incollection{van1996weak,
  title={Weak convergence},
  author={Van Der Vaart, Aad W and Wellner, Jon A},
  booktitle={Weak convergence and empirical processes: with applications to statistics},
  pages={16--28},
  year={1996},
  publisher={Springer}
}

@article{tan2025causal,
  title={Causal effect of functional treatment},
  author={Tan, Ruoxu and Huang, Wei and Zhang, Zheng and Yin, Guosheng},
  journal={Journal of Machine Learning Research},
  volume={26},
  number={91},
  pages={1--39},
  year={2025}
}


\newpage

\appendix

\section{Proofs of main results}
\label{suppsec:proofs}

Before presenting the proofs of the main results, we introduce additional notation to facilitate the exposition. First, for any $k\in\NN$ and $t_1,\dots,t_k\in\mathcal{T}$, we denote the $k$-dimensional projection vectors of $\beta$ and $\drfos$ at $(t_1,\dots,t_k)$ as $B_{t_1, \dots, t_k} = \left(\beta(t_1),\dots,\beta(t_k)\right)^T$ and $\hat{B}_{t_1, \dots, t_k} = \left(\drfos(t_1),\dots,\drfos(t_k)\right)^T$, respectively. Similarly, we denote the $k$-dimensional projection vectors of $\influence{\data}$ and $\hatinfluence{\data}$ at $(t_1,\dots,t_k)$ as $\projection{t_1,\dots,t_k}{\data} = (\influence{\data;t_1}, \dots, \influence{\data;t_k})^T$ and $\hatprojection{t_1,\dots,t_k}{\data} = (\hatinfluence{\data;t_1}, \dots, \hatinfluence{\data;t_k})^T$.

\subsection{Lemma \ref{th:equivalence}}
We start by showing that $\EE{\mathcal{Y}^{(1)}} = \EE{\gamma^{(1)}(\data)}$. This is done exploiting the identifiability Assumptions~\ref{ass:identify}. Indeed
\begin{equation}
    \begin{split}
        \EE{\gamma^{(1)}(\data)} &=\EE{\mu^{(1)}(X) + \frac{A(\mathcal{Y}-\mu^{(1)}(X))}{\pi^{(1)}(X)}} \\
        &= \EE{\mu^{(1)}(X)} + \EE{\frac{A(\mathcal{Y}-\mu^{(1)}(X))}{\pi^{(1)}(X)}} \\
        &= \EE{\EE{\mathcal{Y}\,|\,X,A=1}} + \EE{\EE{\frac{A(\mathcal{Y}-\mu^{(1)}(X))}{\pi^{(1)}(X)}\,\Big|\,X}} \\
        &= \EE{\EE{\mathcal{Y}\,|\,X,A=1}} \\
        &= \EE{\mathcal{Y}^{(1)}}\,.
    \end{split}
\end{equation}
A similar computation leads to $\EE{\mathcal{Y}^{(0)}} = \EE{\gamma^{(0)}(\data)}$. Combining the two results implies the desired equality.

\subsection{Lemma \ref{lem:pan}}
For simplicity, we show the result for a single cross-fitting fold. In particular, denote the distribution where $\hat{\mu}^{(a)}$ and $\hat{\pi}^{(a)}$ are trained as $\hat{\mathbbmss{P}}$, and the distribution where the influence functions are approximated as $\mathbbmss{P}_n$.

First, assume that $\Sigma_{t_1,\dots,t_k} = \EE{ \projection{t_1,\dots,t_k}{\data} \projection{t_1,\dots,t_k}{\data}^T}$ is known. By the Von Mises expansion, we have:
\begin{equation}
   \sqrt{n} \left( \hat{B}_{t_1,\dots,t_k} - B_{t_1,\dots,t_k} \right) = \frac{1}{\sqrt{n}}\sum_{i=1}^n \projection{t_1,\dots,t_k}{\data_i} + \sqrt{n}(\mathbbmss{P}_n - \mathbbmss{P}) \left(\hatprojection{t_1,\dots,t_k}{\data} - \projection{t_1,\dots,t_k}{\data} \right) + \sqrt{n}R_2\,,
\end{equation}
where $\hatprojection{t_1,\dots,t_k}{\data} = (\hat{\varphi}(\data;t_1),\dots,\hat{\varphi}(\data;t_k))^T$, with $\hat{\varphi} = \hat{\gamma}^{(1)} - \hat{\gamma}^{(0)} - \hat{\beta}$. We analyze each component independently. The first term on the right hand side is a sum of mean 0, finite variance random vectors, with the right $n^{-1/2}$ scaling, and by the Central Limit Theorem this converges to a multivariate normal distribution with mean 0 and covariance matrix equal to $\Sigma_{t_1,\dots,t_k}$.

We now provide a bound to the second term, which is an \textit{empirical process}. We need to show that this empirical process, which is a random vector, converges in probability to the 0 vector. By Theorem 2.7 in \citet{van2000asymptotic}, \lorenzo{convergence of this random vector to 0 is equivalent} to component-wise convergence in probability to 0. We thus want to show that each component of the second term is of order $o_\mathbbmss{P}(1)$, as long as the assumption of consistency of the projection of the influence functions holds \citep{kennedy2020optimal}. Conditional on $\hat{\mathbbmss{P}}$, each component of this empirical process has mean 0. In fact, considering an arbitrary component $q\in\{1,\dots,k\}$, we have
\begin{equation}
\begin{split}
    \EE{\mathbbmss{P}_n \left(\richhatprojection{t_1,\dots,t_k}{q}{\data} - \richprojection{t_1,\dots,t_k}{q}{\data} \right)\big|\, \hat{\mathbbmss{P}}} &=  \EE{\richhatprojection{t_1,\dots,t_k}{q}{\data} - \richprojection{t_1,\dots,t_k}{q}{\data}\big|\, \hat{\mathbbmss{P}}} \\
    &= \PP{\richhatprojection{t_1,\dots,t_k}{q}{\data} - \richprojection{t_1,\dots,t_k}{q}{\data}}\,.
\end{split}
\end{equation}
Also, the conditional variance is bounded by 
\begin{equation}
    \begin{split}
        \VV{(\mathbbmss{P}_n - \mathbbmss{P}) \left(\richhatprojection{t_1,\dots,t_k}{q}{\data} - \richprojection{t_1,\dots,t_k}{q}{\data} \right)\big|\, \hat{\mathbbmss{P}}} &= \VV{\mathbbmss{P}_n \left(\richhatprojection{t_1,\dots,t_k}{q}{\data} - \richprojection{t_1,\dots,t_k}{q}{\data} \right)\big|\, \hat{\mathbbmss{P}}} \\
        &= \frac{1}{n} \VV{\richhatprojection{t_1,\dots,t_k}{q}{\data} - \richprojection{t_1,\dots,t_k}{q}{\data}\big|\, \hat{\mathbbmss{P}}} \\
        &\leq \frac{1}{n} \EE{\left(\richhatprojection{t_1,\dots,t_k}{q}{\data} - \richprojection{t_1,\dots,t_k}{q}{\data} \right)^2}\,.
    \end{split}
\end{equation}
Therefore, by iterated expectation and Chebyshev's inequality, we have
\begin{equation}
    \PP{\frac{\sqrt{n} \left| (\mathbbmss{P}_n - \mathbbmss{P}) \left(\richhatprojection{t_1,\dots,t_k}{q}{\data} - \richprojection{t_1,\dots,t_k}{q}{\data} \right) \right|}{\EE{\left(\richhatprojection{t_1,\dots,t_k}{q}{\data} - \richprojection{t_1,\dots,t_k}{q}{\data} \right)^2}} \geq t} \leq \frac{1}{t^2}\,.
\end{equation}
Thus, for any $\varepsilon>0$, there exists a $t=\varepsilon^{-1/2}$ such that the probability above is controlled by $\varepsilon$. This yields the result
\begin{equation}
    (\mathbbmss{P}_n - \mathbbmss{P}) \left(\richhatprojection{t_1,\dots,t_k}{q}{\data} - \richprojection{t_1,\dots,t_k}{q}{\data} \right) = O_\mathbbmss{P}\left(\frac{\EE{\left(\richhatprojection{t_1,\dots,t_k}{q}{\data} - \richprojection{t_1,\dots,t_k}{q}{\data} \right)^2}}{\sqrt{n}}\right)\,.
\end{equation}
In turn, because we assume consistency of the projections of influence functions, this implies the desired result. 

Finally, by assumption, each component in the third term, the \textit{remainder} of the expansion, is of order $o_\mathbbmss{P}(1)$. Summarizing the previous results, we have
\begin{equation}
     \sqrt{n} \left( \hat{B}_{t_1,\dots,t_k} - B_{t_1,\dots,t_k} \right) = \frac{1}{\sqrt{n}}\sum_{i=1}^n \projection{t_1,\dots,t_k}{\data_i} + o_\mathbbmss{P}(1)\,,
\end{equation}
from which the required CLT follows.

By Slutsky theorem, the previous result holds also when $\hat{\Sigma}_{t_1,\dots,t_k} \pto \Sigma_{t_1,\dots,t_k}$ replaces $\Sigma_{t_1,\dots,t_k}$. 

\subsection{Theorem \ref{th:gp}}
According to Theorem 7.5 in \citet{billingsley2013convergence}, we need to show that:
\begin{itemize}
    \item The finite-dimensional projections of $\drfos$ converge to normal distributions;
    \item $\lim_{\delta\to0}\limsup_{n\to\infty}\PP{w(\drfos;\delta)\geq \varepsilon} = 0$, where $w(\drfos;\delta) = \sup_{|s-t|\leq\delta} |\drfos(s) - \drfos(t)|$ is the \textit{modulus of continuity}. 
\end{itemize}

The first requirement is satisfied by Lemma \ref{lem:pan}. Here we focus on the second requirement, which is essentially \textit{tightness}, \lorenzo{or \textit{stochastic equicontinuity}.}

First, by applying the triangle inequality on the modulus of continuity one obtains:
\begin{equation}
\label{eq:Mod_cont}
\begin{split}
    w(\drfos;\delta) &= \sup_{|s-t|\leq\delta} |\drfos(s) - \drfos(t)| \\
    &\leq \frac{1}{n} \sum_{i=1}^n \left(1 + \frac{A_i}{\hat{\pi}^{(1)}(X_i)}\right) \sup_{|s-t|\leq\delta} \left|\hat{\mu}^{(1)}(X_i;s) - \hat{\mu}^{(1)}(X_i;t)\right| \\
    &+ \frac{1}{n} \sum_{i=1}^n \left(1 + \frac{1 - A_i}{1 - \hat{\pi}^{(1)}(X_i)}\right) \sup_{|s-t|\leq\delta} \left|\hat{\mu}^{(0)}(X_i;s) - \hat{\mu}^{(0)}(X_i;t)\right| \\
    &+\frac{1}{n} \sum_{i=1}^n \frac{A_i - \hat{\pi}^{(1)}(X_i)}{\hat{\pi}^{(1)}(X_i)(1 - \hat{\pi}^{(1)}(X_i))} \sup_{|s-t|\leq\delta} \left| \mathcal{Y}_i(s) - \mathcal{Y}_i(t)\right| \\
    &= M(\drfos;\delta)\,.
\end{split}
\end{equation}

Notice that $w(\drfos;\delta)\geq0$. Therefore, by Markov inequality, it holds that:
\begin{equation}
\label{eq:markov}
    \PP{w(\drfos;\delta)\geq \varepsilon} \leq \frac{\EE{w(\drfos;\delta)}}{\varepsilon} \leq \frac{\EE{M(\drfos;\delta)}}{\varepsilon}\,,
\end{equation}
where the second inequality follows from Eq.~\ref{eq:Mod_cont}. \lorenzo{Crucially, by Assumption~\ref{ass:estimate}\textbf{c} and the use of cross-fitting, $\hat{\pi}^{(1)}(X_i)$ is bounded away from 0 and 1 with probability 1. This ensures the uniform boundedness in probability of the inverse weights, $(1+A_i/\hat{\pi}^{(1)}(X_i))$ and $(1+(1-A_i)/(1-\hat{\pi}^{(1)}(X_i)))$, which is necessary for the Markov inequality argument to hold.}
We can now take limits, and obtain:
\begin{equation}
    \lim_{\delta\to0}\limsup_{n\to\infty}\PP{w(\drfos;\delta)\geq \varepsilon} \leq \lim_{\delta\to0}\limsup_{n\to\infty}\frac{\EE{M(\drfos;\delta)}}{\varepsilon}\,.
\end{equation}

The three terms in $M(\drfos;\delta)$ can all be bounded using the assumptions on expected Hölder continuity. In fact, exploiting the assumption of no unmeasured confounding and assumption \ref{ass:estimate}\textbf{d}, the first term is
\begin{equation}
    \limsup_{n\to\infty}\frac{1}{n} \sum_{i=1}^n \EE{1 + \frac{A_i}{\hat{\pi}^{(1)}(X_i)}} \EE{\sup_{|s-t|\leq\delta} \left|\hat{\mu}^{(1)}(X_i;s) - \hat{\mu}^{(1)}(X_i;t)\right|} \overset{{\delta\to0}}{\rightarrow} 0\,.
\end{equation}
A similar argument holds for the second term. The third term is
\begin{equation}
\limsup_{n\to\infty}\frac{1}{n} \sum_{i=1}^n \EE{\frac{A_i - \hat{\pi}^{(1)}(X_i)}{\hat{\pi}^{(1)}(X_i)(1 - \hat{\pi}^{(1)}(X_i))}} \EE{\sup_{|s-t|\leq\delta} \left| \mathcal{Y}_i(s) - \mathcal{Y}_i(t)\right|} \overset{{\delta\to0}}{\rightarrow} 0\,,
\end{equation}
where we used assumption \ref{ass:estimate}\textbf{e}.

\lorenzo{Finally, the estimator $\drfos$ is defined as a linear combination of continuous functions ($\mathcal{Y}_i$ and the estimated continuous regression functions $\hat{\mu}^{(a)}$), ensuring that $\drfos$ is a continuous function and therefore a measurable element of $C(\mathcal{T})$.}

\section{Functional IPW}
\label{suppsec:fipw}
\lorenzo{Here, we operate in the space $C(\mathcal{T})$ equipped with the Borel $\sigma$-algebra generated by the sup-norm topology. Since the outcome $\mathcal{Y}$ is continuous, the pointwise operations (multiplication by the scalar weights and difference) preserve continuity, confirming that the map 
\begin{equation}
    (X, A, \mathcal{Y}) \mapsto \frac{A \mathcal{Y}}{\hat{\pi}^{(1)}(X)} - \frac{(1-A) \mathcal{Y}}{1-\hat{\pi}^{(1)}(X)}
\end{equation} 
is a measurable element of $C(\mathcal{T})$.}

The \textit{functional inverse probability weighting} (\texttt{IPW}) estimator is \lorenzo{then} defined as
\begin{equation}
    \hat{\beta}_{\texttt{IPW}} = \frac{1}{n} \sum_{i=1}^n \left[ \frac{A_i\mathcal{Y}_i}{\hat{\pi}^{(1)}(X_i)} - \frac{(1-A_i) \mathcal{Y}_i}{1 - \hat{\pi}^{(1)}(X_i)}\right]\,.
\end{equation}
We first show that the functional \texttt{IPW} is unbiased if the true propensity score is known.
\begin{proposition}
    Under Assumptions~\ref{ass:identify} (identifiability), if $\hat{\pi}^{(1)} = \pi^{(1)}$ a.s., one has
    \begin{equation}
        \EE{\hat{\beta}_{\texttt{IPW}}} = \beta\,.
    \end{equation}
\end{proposition}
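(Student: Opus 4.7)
The plan is a direct moment computation: since the summands in $\hat{\beta}_{\texttt{IPW}}$ are i.i.d., linearity of expectation reduces the problem to showing that
\begin{equation}
    \EE{\frac{A\,\mathcal{Y}}{\pi^{(1)}(X)}} = \EE{\mathcal{Y}^{(1)}}, \qquad \EE{\frac{(1-A)\,\mathcal{Y}}{1-\pi^{(1)}(X)}} = \EE{\mathcal{Y}^{(0)}},
\end{equation}
after which subtracting the two equalities and invoking the definition of $\beta$ in Eq.~\ref{eq:FATE} completes the argument. The positivity assumption in Assumption~\ref{ass:identify}\textbf{c} guarantees that the weights $1/\pi^{(1)}(X)$ and $1/(1-\pi^{(1)}(X))$ are almost surely well defined, so every quantity below is finite.

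For the treated piece, I would first replace $\mathcal{Y}$ with $\mathcal{Y}^{(1)}$ on the event $\{A=1\}$ using the consistency part of Assumption~\ref{ass:identify}\textbf{a}, yielding $A\mathcal{Y} = A\mathcal{Y}^{(1)}$. Then I would condition on $X$ and apply no unmeasured confounding (Assumption~\ref{ass:identify}\textbf{b}) to factor the conditional expectation:
\begin{equation}
    \EE{\frac{A\,\mathcal{Y}^{(1)}}{\pi^{(1)}(X)}} = \EE{\frac{1}{\pi^{(1)}(X)}\,\EE{A\mid X}\,\EE{\mathcal{Y}^{(1)}\mid X}} = \EE{\EE{\mathcal{Y}^{(1)}\mid X}} = \EE{\mathcal{Y}^{(1)}},
\end{equation}
where the middle step uses $\EE{A\mid X} = \pi^{(1)}(X)$ to cancel the propensity weight. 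The control piece is symmetric: on $\{A=0\}$ consistency gives $(1-A)\mathcal{Y} = (1-A)\mathcal{Y}^{(0)}$, and the same conditioning trick with $\EE{1-A\mid X} = 1-\pi^{(1)}(X)$ yields $\EE{\mathcal{Y}^{(0)}}$.

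Since the display is a single-line computation leaning entirely on the three identifiability assumptions, there is no real obstacle beyond bookkeeping; the only subtlety is to invoke consistency \emph{before} taking conditional expectation, so that $\mathcal{Y}$ can be rewritten as the appropriate potential outcome and then decoupled from $A$ via conditional independence. Combining the two expectations and using $\beta = \EE{\mathcal{Y}^{(1)} - \mathcal{Y}^{(0)}}$ delivers the claim.
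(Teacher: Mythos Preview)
Your argument is correct and follows essentially the same route as the paper's own proof: reduce to a single summand by linearity, condition on $X$, use consistency to rewrite $A\mathcal{Y}=A\mathcal{Y}^{(1)}$ (respectively $(1-A)\mathcal{Y}=(1-A)\mathcal{Y}^{(0)}$), then use no unmeasured confounding to factor $\EE{A\mathcal{Y}^{(1)}\mid X}=\EE{A\mid X}\EE{\mathcal{Y}^{(1)}\mid X}$ and cancel the propensity weight. The only cosmetic difference is that the paper compresses the consistency and conditional-independence steps into a single line, whereas you spell out explicitly that consistency must be invoked before factoring; substantively the proofs are the same.
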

\begin{proof}
    By properties of expectation and by the law of iterated expectations, we have
    \begin{equation}
        \begin{split}
            \EE{\hat{\beta}_{\texttt{IPW}}} &= \frac{1}{n} \sum_{i=1}^n \left[ \EE{\frac{A\mathcal{Y}}{\hat{\pi}^{(1)}(X)}} - \EE{\frac{(1-A) \mathcal{Y}}{1 - \hat{\pi}^{(1)}(X)}}\right] \\
            &=  \EE{\EE{\frac{A\mathcal{Y}}{\hat{\pi}^{(1)}(X)}\,\Big|\,X}} - \EE{\EE{\frac{(1-A) \mathcal{Y}}{1 - \hat{\pi}^{(1)}(X)}\,\Big|\,X}} \\
            &= \EE{\frac{\EE{A\,|\,X}}{\hat{\pi}^{(1)}(X)}\EE{\mathcal{Y}^{(1)}\,|\,X}} - \EE{\frac{\EE{1 - A \,|\,X}}{1 - \hat{\pi}^{(1)}(X)}\EE{\mathcal{Y}^{(0)}\,|\,X}} \\
            &= \EE{\mathcal{Y}^{(1)} - \mathcal{Y}^{(0)}} \\
            &= \beta\,.
        \end{split}
    \end{equation}
\end{proof}

The previous result implies that if $\hat{\pi}^{(1)}\pto\pi^{(1)}$, then the functional \texttt{IPW} will be consistent estimator of the FATE. 

We now show that the functional \texttt{IPW} asymptotically behaves as a Gaussian process if the true propensity score is known. First, we introduce the following assumptions. \\

\begin{assumption}[Inference for \texttt{IPW}]
\label{ass:estimateIPW}
    Assume that: 
    \begin{enumerate}[label=\textbf{\alph*.}]
        \item For any $t,s\in\mathcal{T}$, $\Sigma(s,t) = \text{cov}(\zeta(t),\zeta(s)) < \infty\,$, where $\zeta(t) = \frac{A\mathcal{Y}(t)}{\hat{\pi}^{(1)}(X)} - \frac{(1-A) \mathcal{Y}(t)}{1 - \hat{\pi}^{(1)}(X)}$.
        \item Given $\xi>0$, $\pi^{(a)}$ is bounded away from $\xi$ and $1-\xi$ with probability 1.
        \item For any $\delta>0$, the functional outcome satisfies
        \begin{equation}
            \EE{\sup_{|s-t|\leq \delta} \left|\mathcal{Y}(s)-\mathcal{Y}(t)\right|} \leq L \delta^\tau\,
        \end{equation}
        for some constants $L\geq0$ and $\tau>0$.
    \end{enumerate}
\end{assumption}
\bigskip

\begin{theorem}
\label{th:gp_ipw}
   Under Assumptions \ref{ass:identify} (identifiabiliy) and \ref{ass:estimateIPW} (inference), if $\hat{\pi}^{(1)} = \pi^{(1)}$ a.s., one has
    \begin{equation}
        \sqrt{n}\left(\hat{\beta}_{\texttt{IPW}} - \beta \right) \dto \mathcal{GP}(0, \Sigma)\,,
    \end{equation}
    where $\Sigma(s,t) = \text{cov}(\zeta(t),\zeta(s))\,$, $\zeta(t) = \frac{A\mathcal{Y}(t)}{\hat{\pi}^{(1)}(X)} - \frac{(1-A) \mathcal{Y}(t)}{1 - \hat{\pi}^{(1)}(X)}$. 
\end{theorem}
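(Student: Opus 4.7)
The plan is to follow the same two-step blueprint used in Theorem \ref{th:gp}, namely an application of Theorem 7.5 in \citet{billingsley2013convergence}: establish (i) finite-dimensional convergence of projections and (ii) asymptotic tightness via control of the modulus of continuity.

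For the finite-dimensional step, observe that when $\hat{\pi}^{(1)} = \pi^{(1)}$ a.s., we can write $\hat{\beta}_{\texttt{IPW}}(t) = \frac{1}{n}\sum_{i=1}^n \zeta_i(t)$, where $\zeta_i(t)$ are i.i.d.~copies of $\zeta(t)$, and by the unbiasedness result already proved we have $\EE{\zeta(t)} = \beta(t)$. Fix $k\in\NN$ and $t_1,\dots,t_k\in\mathcal{T}$, and note that Assumption \ref{ass:estimateIPW}\textbf{a} guarantees that $\VV{\zeta(t_j)}<\infty$ for each $j$, which by Cauchy–Schwarz implies finiteness of every covariance $\Sigma(t_j,t_\ell)$. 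The random vectors $(\zeta_i(t_1),\dots,\zeta_i(t_k))^T$ are therefore i.i.d.~with finite covariance matrix, so a direct application of the multivariate CLT (or the Cramér–Wold device together with the univariate CLT) gives
\begin{equation}
\sqrt{n}\bigl(\hat{\beta}_{\texttt{IPW}}(t_1)-\beta(t_1),\dots,\hat{\beta}_{\texttt{IPW}}(t_k)-\beta(t_k)\bigr)^T \dto \Normal{0}{(\Sigma(t_j,t_\ell))_{j,\ell}}\,.
\end{equation}
This is considerably simpler than Lemma \ref{lem:pan}, since there is no empirical-process or Von Mises remainder to control: the nuisance is assumed known exactly.

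For tightness, I would mirror the argument in the proof of Theorem \ref{th:gp}. The modulus of continuity $w(\hat{\beta}_{\texttt{IPW}};\delta)=\sup_{|s-t|\leq\delta}\lvert \hat{\beta}_{\texttt{IPW}}(s)-\hat{\beta}_{\texttt{IPW}}(t)\rvert$ admits the bound
\begin{equation}
w(\hat{\beta}_{\texttt{IPW}};\delta) \leq \frac{1}{n}\sum_{i=1}^n \left(\frac{A_i}{\pi^{(1)}(X_i)} + \frac{1-A_i}{1-\pi^{(1)}(X_i)}\right) \sup_{|s-t|\leq\delta}\bigl|\mathcal{Y}_i(s)-\mathcal{Y}_i(t)\bigr|\,.
\end{equation}
By Assumption \ref{ass:estimateIPW}\textbf{b}, the weight in parentheses is bounded above by $1/\xi + 1/(1-\xi)$ almost surely; by Assumption \ref{ass:estimateIPW}\textbf{c}, the expected supremum of the increment is at most $L\delta$. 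Taking expectation, independence of $A_i,X_i$ from the sample path increments is not needed: a Markov inequality gives
\begin{equation}
\PP{w(\hat{\beta}_{\texttt{IPW}};\delta)\geq\varepsilon}\leq \frac{1}{\varepsilon}\,\EE{\left(\frac{A}{\pi^{(1)}(X)} + \frac{1-A}{1-\pi^{(1)}(X)}\right)\sup_{|s-t|\leq\delta}|\mathcal{Y}(s)-\mathcal{Y}(t)|}\leq \frac{C\,L\,\delta}{\varepsilon}\,,
\end{equation}
where $C$ absorbs the propensity bound (and a conditioning argument on $X$ justifies factoring the weight out of the Lipschitz expectation). Letting first $n\to\infty$ and then $\delta\to 0$ yields the required vanishing.

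The main obstacle, if any, is bookkeeping rather than mathematics: one must ensure the Lipschitz-in-expectation bound of Assumption \ref{ass:estimateIPW}\textbf{c} composes correctly with the (almost surely bounded) inverse-propensity weights. A clean way to handle this is to condition on $X$ before applying Assumption \ref{ass:estimateIPW}\textbf{c}, because the bound on $\EE{\sup_{|s-t|\leq\delta}|\mathcal{Y}(s)-\mathcal{Y}(t)|}$ is a marginal property; if one instead requires the analogous conditional Lipschitz bound, the propensity weight factors out after conditioning and the argument proceeds verbatim. Once both pieces are in place, Billingsley's Theorem 7.5 delivers weak convergence in $C(\mathcal{T})$ to a centered Gaussian process with covariance $\Sigma$, which is the claim.
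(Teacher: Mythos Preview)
Your proposal is correct and follows essentially the same route as the paper: invoke Theorem 7.5 of \citet{billingsley2013convergence}, obtain finite-dimensional convergence from the standard multivariate CLT (no Von Mises remainder because the nuisance is known), and obtain tightness by bounding the modulus of continuity exactly as in the proof of Theorem~\ref{th:gp} with $\hat{\mu}^{(1)}=\hat{\mu}^{(0)}=0$. Your explicit weight $\tfrac{A_i}{\pi^{(1)}(X_i)}+\tfrac{1-A_i}{1-\pi^{(1)}(X_i)}$ is the correct form of the third term in Eq.~\eqref{eq:Mod_cont} after setting the regression functions to zero, and since Assumption~\ref{ass:estimateIPW}\textbf{b} bounds it almost surely by a constant, no conditioning subtlety actually arises---you can simply pull the constant out before applying Assumption~\ref{ass:estimateIPW}\textbf{c}.
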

\begin{proof}
    The proof of this result follows again from Theorem 7.5 in \citet{billingsley2013convergence}. In particular, the standard Central Limit Theorem guarantees that scaled finite-dimensional projections of $\hat{\beta}_\texttt{IPW}$ converge to multivariate normal distributions. It remains to be shown that the condition on the modulus of continuity holds. This follows from the proof of Theorem~\ref{th:gp}, setting $\hat{\mu}^{(1)} = \hat{\mu}^{(0)} = 0$. 
\end{proof}

\begin{remark} 
\lorenzo{The \texttt{IPW} estimator can alternatively be interpreted as an expectation taken under a weighted law, effectively a change of measure via a Radon-Nikodým derivative between the observational and counterfactual laws \citep{bruns2025augmented, raykov2025kernel}. Our Gaussian Process limit for $\hat{\beta}_{\text{IPW}}$ (Theorem~\ref{th:gp_ipw}) confirms the expected functional Central Limit Theorem behavior for this weighted empirical process. Under the positivity condition (Assumption~\ref{ass:estimateIPW}\textbf{b}), the weights are bounded, and combined with the continuity of sample paths (Assumption~\ref{ass:estimateIPW}\textbf{c}), this is sufficient to ensure tightness in $C(\mathcal{T})$, aligning our result with the theoretical frameworks developed under the RN perspective.}
\end{remark}

\clearpage
\section{Further simulation details}
\label{suppsec:simulations}




\subsection{Additional simulation results}
\begin{figure}[h!]
    \centering
    \includegraphics[width=\linewidth]{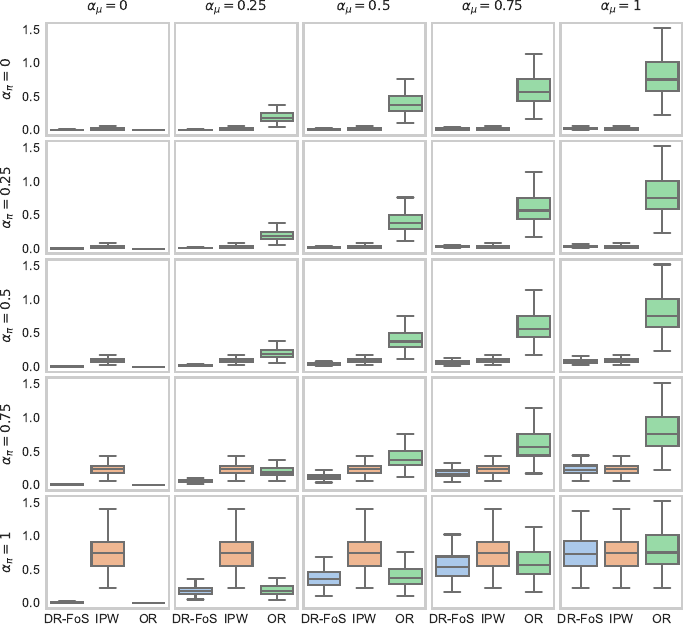}
    \caption{Results of simulation study run as described in Section \ref{sec:simulate}. Each panel displays estimation performance of \DRFOS{}, \texttt{IPW} and \texttt{OR} under different $\alpha_\mu$ and $\alpha_\pi$. For each box, the center line represents the median; the lower and upper hinges correspond to the first and third quartiles; the upper and lower whiskers span 1.5 times the interquartile range. \DRFOS{} consistently matches or outperforms the other estimators, demonstrating the strength of double robustness.}
    \label{fig:full_sim}
\end{figure}

\subsection{Simulations with non-smooth trajectories}
\label{sec:jump}
\lorenzo{Here, we repeat the simulation study described in Section \ref{sec:simulate}, adding discontinuitites to the regression functions. We expect that, in the presence of sharp local irregularities (as jumps or cusps), simultaneous coverage bands can become overly large. Instead of using smooth regression functions, we generate the true regression functions according to the model
\begin{equation}
    \mu^{(a)}(t) = a\Tilde{\beta}(t) + \rho(t) + 3\one^{(a)}(t)\,,\quad a\in\{0,1\}\,,
\end{equation}
where the $\Tilde{\beta}$ and the baseline function $\rho$ are drawn from a $0$ mean Gaussian process with a Matern covariance function as before, while $\one^{(a)}(t)$ is defined as:
\begin{equation}
   \one^{(a)}(t) = \begin{cases}
        1 & a=1, 0\leq t\leq 1/3\\
        1 & a=0, 2/3\leq t\leq 1\\\
        0 & \text{otherwise}\,.
    \end{cases}
\end{equation} 
The true FATE in this simulation is $\beta(t) = \Tilde{\beta}(t) + 3\one^{(a)}(t)$.}
\begin{figure}
    \centering
    \includegraphics[width=\linewidth]{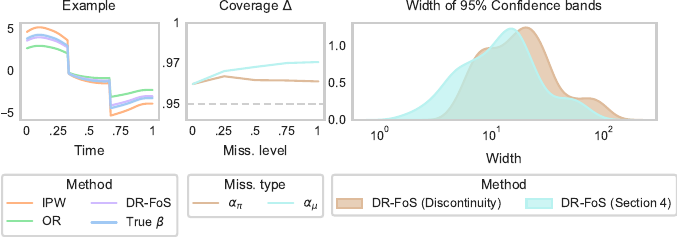}
    \caption{\lorenzo{Simulation results based on the data generating process described in Section~\ref{sec:simulate} with the discontinuity in the regression functions as described in Appendix Section \ref{sec:jump}. The first panel (leftmost) shows an example of true FATE $\beta$ (blue) as generated by our simulation scheme, and the estimates provided by \texttt{IPW} (orange), \texttt{OR} (green) and \DRFOS{} (purple). In this example we set $\alpha_\mu=0.25$ and $\alpha_\pi=0.75$. The second panel shows the average coverage levels achieved under misspecifications of the regression functions (light blue, $\alpha_\pi=0$) and of the propensity score (brown, $\alpha_\mu=0$). The grey dashed line represents the nominal $95\%$ coverage level. The third panel (rightmost) compares the average widths of the 95\% simultaneous confidence bands from two simulations, displayed on a logarithmic axis. Specifically, it plots the distribution of the average band width from the simulation with regression function discontinuities against the average band width from the original (continuous) simulation (described in Section \ref{sec:simulate}). The plot clearly demonstrates that the introduction of discontinuities poses a significant challenge. To maintain the 95\% coverage guarantee under these conditions, \DRFOS{} robustly manages the issue by enlarging the width of the confidence band.}}
    \label{fig:jump}
\end{figure}

\lorenzo{Figure \ref{fig:jump} confirms our intuition: to maintain valid 95\% coverage guarantee under this new setting with discontinuities, \DRFOS{} enlarges the width of the confidence bands.}

\subsection{Simulations with an alternative, explicit formulation}
\label{suppsec:expl_sim}

We complement the main simulation study by presenting an additional scenario based on an explicit function-on-scalar data generating process with linear structure, mimicking the setting presented by \citet{liu2024double}. This simulation study closely mirrors classical function-on-scalar regression setups and serves to validate the robustness of the proposed estimator \DRFOS{} against model misspecification.

For each unit $i = 1, \dots, n$, we generate covariates $X_i \in \RR^p$, with $p = 5$ and $X_i \sim \text{Unif}(-1,1)^p$; treatment assignments $A_i \sim \text{Ber}(\pi(X_i))$, where the propensity score takes the logistic form $\pi(X_i) = (1 + \exp(-X_i^T \eta + \epsilon_i))^{-1}$, the coefficients $\eta \in \RR^p$ are drawn once from $\Normal{0}{I_p}$, and random noise is sampled according to $\epsilon_i\sim\Normal{0}{1}$; potential outcomes $\mathcal{Y}_i^{(0)} = \sum_{j=1}^p X_i \theta^{(0)}_j + \varepsilon_i$ and $\mathcal{Y}_i^{(1)} = \sum_{j=1}^p X_i \theta^{(1)}_j + D + \varepsilon_i$, where each $\theta^{(0)}_j$ and $\theta^{(1)}_j$ are drawn from a Gaussian process with a Matern covariance function as in Equation~\ref{eq:matern}, with parameters $l=0.25$, $\nu=3.5$, and choose $\eta=1$, $D$ is a constant function set at $1$, and the functional error $\varepsilon_i$ is drawn pointwise from a standard normal distribution; the observed outcome is $\mathcal{Y}_i = A_i \mathcal{Y}_i^{(1)} + (1-A_i) \mathcal{Y}_i^{(0)}$; the true functional average treatment effect (FATE) is $\beta = \EE{\mathcal{Y}^{(1)}_i - \mathcal{Y}^{(0)}_i}$.

We estimate $\beta$ using the outcome regression model in Equation~\ref{eq:or}, the \texttt{IPW} estimator in Equation~\ref{eq:ipw}, and \DRFOS{} as defined in Equation~\ref{eq:drfos}. All estimators employ a logistic regression model for learning $\hat{\pi}^{(a)}$, while we exploit both standard function-on-scalar linear models \citep{kokoszka2017introduction} and a state-of-the-art machine learning method, called \texttt{FunGCN} \citep{boschi2024fungcn}, to learn $\hat{\mu}^{(a)}$. To fit \texttt{FunGCN}, we set the following hyperparameters: $forecast\_ratio = 0$, $pruning = 0.7$, $k\_gcn = 10$, $lr = 5e-5$, $max\_selected = 5$, $k\_graph = 3$, $nhid = [32, 32]$, $epochs = 50$, $batch\_size = 1$, $dropout = 0$, $kernel\_size = 0$, $patience = 5$, $min\_delta = 0$, $val\_size = 0$, $test\_size = 0$. We refer to \citet{boschi2024fungcn} for an in-depth description of these hyperparameters.

To assess robustness, we introduce a misspecified feature set $\Tilde{X}_i\in \RR^3$, derived from the original covariates via nonlinear and interaction terms:
\begin{equation}
    \Tilde{X}_i = \begin{bmatrix}
\sin(X_{i1}) \\
(X_{i2} + X_{i3})^2 \\
\log(1 + |X_{i4}|)
\end{bmatrix}\,.
\end{equation}
Under well-specified scenarios, both the outcome regression and propensity score models are fitted using $X_i$. Under misspecification, nuisance models are fitted using $\Tilde{X}_i$ instead of $X_i$. All other aspects of the simulation, including the true data-generating process, remain unchanged.

\begin{figure}
    \centering
    \includegraphics[width=\linewidth]{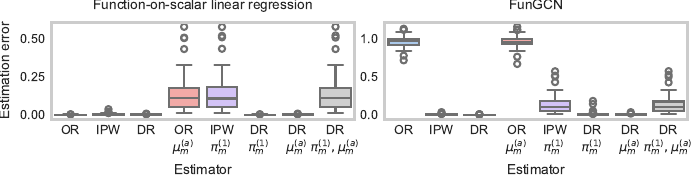}
    \caption{Results of the simulation study described in Appendix Section~\ref{suppsec:expl_sim}. Each panel shows the estimation performance of \DRFOS{}, \texttt{IPW}, and \texttt{OR} under both well-specified and misspecified settings. Misspecification is introduced by fitting nuisance functions using transformed covariates $\Tilde{X}_i$, $i = 1, \dots, n$. Misspecified outcome regressions are denoted by $\mu^{(a)}_m$, and misspecified propensity scores by $\pi^{(1)}_m$. For example, \texttt{OR} $\mu^{(a)}_m$ refers to the outcome regression estimator fitted with a misspecified regression function, DR $\pi^{(1)}_m$ denotes the \DRFOS{} estimator fitted with a misspecified propensity score but well-specified regression functions, while DR $\pi^{(1)}_m,\mu^{(a)}_m$ denotes the \DRFOS{} estimator constructed using both a misspecified propensity score and misspecified regression functions.
    For each box, the center line represents the median; the lower and upper hinges correspond to the first and third quartiles; the upper and lower whiskers span 1.5 times the interquartile range. \DRFOS{} consistently matches or outperforms the other estimators, demonstrating the strength of double robustness.}
    \label{fig:expl_sim}
\end{figure}

We repeat each experiment across 50 independent random seeds. The results are summarized in Figure~\ref{fig:expl_sim}. Under well-specified conditions, the outcome regression (\texttt{OR}) estimator based on a linear function-on-scalar model exhibits excellent performance, with consistently low estimation error across replications. In contrast, the \texttt{OR} estimator using \texttt{FunGCN} displays slightly higher variability due to the model’s tendency to oversmooth functional predictions. The \texttt{IPW} estimator also performs strongly in the well-specified setting, and as predicted by theory, \DRFOS{} matches or exceeds this performance by combining both nuisance components. In particular, when \texttt{FunGCN} is used for outcome regression, \DRFOS{} clearly benefits from the \texttt{IPW} correction, compensating for the oversmoothing bias. In misspecified scenarios, both the \texttt{OR} and \texttt{IPW} estimators degrade significantly, confirming their sensitivity to model misspecification. In contrast, \DRFOS{} maintains superior performance when either the regression or the propensity score model is correctly specified, and provides comparable performance even when both are misspecified. Notably, in the \texttt{FunGCN} setting, \DRFOS{} consistently outperforms the misspecified \texttt{OR} estimator, once again highlighting the value of the \texttt{IPW} correction in mitigating outcome model bias.


\section{Additional real-data analysis}
\label{suppsec:more_real}

In addition to the analysis presented in the main text, here we provide a further robustness check, where we use a state-of-the-art machine learning method for functional data called \texttt{FunGCN} \citep{boschi2024fungcn}, to fit the regression model $\hat\mu^{(a)}$. \lorenzo{\texttt{FunGCN}, which leverages a Graph Convolutional Network architecture, serves as a non-linear function-on-scalar regression tool. It is specifically designed to handle multimodal data and learn complex dependencies between functional outcomes and scalar/vector covariates, even on irregular observation grids. This approach differs from earlier work like \texttt{funcNN} \citep{thind2020funcnn} by explicitly using graph-based message passing to capture dependencies across the functional domain and integrate scalar covariates efficiently into the network.} To fit \texttt{FunGCN}, we set the following hyperparameters: $forecast\_ratio = 0$, $pruning = 0.7$, $k\_gcn = 10$, $lr = 5e-5$, $max\_selected = 5$, $k\_graph = 3$, $nhid = [32, 32]$, $epochs = 50$, $batch\_size = 1$, $dropout = 0$, $kernel\_size = 0$, $patience = 5$, $min\_delta = 0$, $val\_size = 0$, $test\_size = 0$. We refer to \citet{boschi2024fungcn} for an in-depth description of these hyperparameters.

\begin{figure}
    \centering
    \includegraphics[width=\linewidth]{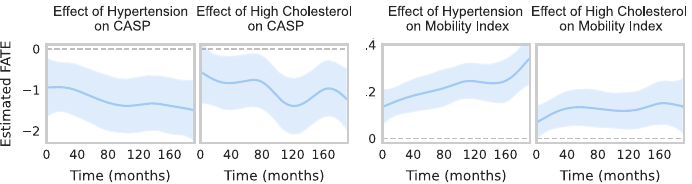}
    \caption{SHARE application results using \texttt{FunGCN} as model for the regression function. Each panel displays a different estimated causal effect. Blue continuous lines correspond to \DRFOS{} estimates; blue bands are 95\% asymptotic simultaneous confidence bands obtained by repeatedly sampling from the Gaussian process; grey dotted horizontal lines correspond to 0.}
    \label{fig:app_gcn}
\end{figure}

The estimation strategy is the same as in the main text. For each combination of chronic disease and functional outcome, we fit \DRFOS{} using \texttt{FunGCN} for the regression function $\hat{\mu}^{(a)}$, and logistic regression for the propensity score $\hat{\pi}^{(a)}$. We employ cross-fitting with 5 balanced folds. Results are shown in Figure~\ref{fig:app_gcn} and fully align with the findings in the main text, supporting their robustness.

\end{document}